\documentclass[conference, a4paper]{IEEEtran}
\IEEEoverridecommandlockouts

\usepackage{multirow}
\usepackage{lipsum}
\usepackage{amsfonts}
\usepackage{array}
\usepackage{amsmath,cases,amssymb}
\usepackage{amsmath}
\usepackage{amsthm}
\usepackage{graphicx}
\usepackage{cite}
\usepackage{subfigure}
\usepackage{epsfig}
\usepackage{url}
\usepackage{algorithm}
\usepackage{algorithmic}
\usepackage{epstopdf}
\usepackage{balance}
\usepackage{color}
\usepackage{makecell}
\usepackage{sansmath}




\DeclareGraphicsExtensions{.eps,.pdf}



\newcommand{\maxi}{{\mathrm{maximize}}}

\newcommand{\bw}{\mathbf{w}}

\newcommand{\bH}{\mathbf{H}}

\newcommand{\bI}{\mathbf{I}}
\newcommand{\bh}{\mathbf{h}}

\newcommand{\bW}{\mathbf{W}}

\newcommand{\Qcal}{\mathcal{Q}}
\newcommand{\Kcal}{\mathcal{K}}

\newcommand{\UEk}{\mathtt{UE}_k}



\makeatletter
\newcommand{\doublewidetilde}[1]{{%
		\mathpalette\double@widetilde{#1}}}
\newcommand{\double@widetilde}[2]{%
	\sbox\z@{$\m@th#1\widetilde{#2}$}%
	\ht\z@=.5\ht\z@
	\widetilde{\box\z@}}
\makeatother

\newtheorem{theorem}{Theorem}
\newtheorem{lemma}{Lemma}
\newtheorem{corollary}{Corollary}

\newtheorem{remark}{Remark}
\newtheorem{assumption}{Assumption}
\begin{document}
	
	\title{\huge Learning to Optimize: Balancing Two Conflict Metrics in MB-HTS Networks \vspace*{0cm}}
	
	\author{Van-Phuc~Bui$^*$, Trinh~Van~Chien$^\xi$,  Eva~Lagunas$^*$,  Jo\"el~Grotz$^\dag$, Symeon~Chatzinotas$^*$,  and Bj\"orn~Ottersten$^*$ 
		\\
		$^*$Interdisciplinary Centre for Security, Reliability and Trust (SnT), University of Luxembourg, Luxembourg\\
		$^\xi$School of Information and Communication Technology, Hanoi University of Science and Technology, Vietnam \\
		$^\dag$SES Engineering, Luxembourg, Luxembourg
		\thanks{This work has been supported by the Luxembourg National Research Fund (FNR) under the project INSTRUCT (IPBG19/14016225/INSTRUCT). 
		}
		
	}
	\maketitle
	\begin{abstract}
		\textcolor{black}{For multi-beam high throughput (MB-HTS) geostationary (GEO) satellite networks, the congestion appears when user's demands cannot be fully satisfied. This paper boosts the system performance by formulating and solving the power allocation strategies under the congestion control to admit users. A new multi-objective optimization is formulated to balance the sum data throughput and the satisfied user set. After that, we come up with two different solutions, which efficiently tackle the multi-objective maximization problem: The model-based solution utilizes the weighted sum method to enhance the number of demand-satisfied users, whilst the supervised learning solution offers a low-computational complexity design by inheriting optimization structures as continuous mappings.  Simulation results verify that our solutions effectively copes with the congestion and outperforms the data throughput demand than the other previous works.}
	\end{abstract}

	\section{Introduction}\label{sec:intro}
	\textcolor{black}{An effective solution giving high-speed broadband data throughput to users has been recorded in  multi-beam high throughput satellite (MB-HTS) networks,  applicable for a large area that is inaccessible or under-served locations by the terrestrial systems \cite{Kodheli:21:tut}. Related works in the literature related to precoded MB-HTS have been observed, while many of them including the data throughput constraints regarding the minimum throughput for the users' weakest channel conditions \cite{ZhangAsilomar2020}. To guarantee the individual data throughput demand, the work in \cite{Zheng:twc:12} studied an optimal design in a multi-beam satellite system via using an alternating optimization method. In spite of data throughput improved along the iterative procedure, the proposed solution in \cite{Zheng:twc:12} has the high computational complexity and low channel rate because the max-min fairness optimization can not ensure an acceptable data throughput to all the users in a large-scale system. Linear precoding as  regularized zero-forcing (RZF) or zero-forcing (ZF) has manifested the high sum throughput with a low cost in MB-HTS networks \cite{trinh2021user}. Nonetheless, the previous works are all based on non-empty feasible regions to ensure the existence of a solution. For a complicated network serving many users with divergent data throughput, only one user having very weak  channel condition due to obstacles or the high requested QoS might be sufficient to make the optimization problem infeasible under limited radio resources.}

	\textcolor{black}{The potentialities of machine learning has approved in designing data-driven solutions for challenging problems in data analysis and wireless communications thanks to the exploitation of neural networks \cite{Lei:Access:20, Chien19:TCOM}. In particular,  the difficulties from the NP-hardness of different beam hopping  were effectively tackled with a machine learning-based solution giving a high accuracy in \cite{Lei:Access:20}. For the resource allocation in satellite communications, e.g., the power controls, the work in \cite{Chen:cof:20} minimized the total transmit power consumption with respect to the data throughput requirements for multi-beam satellite network without using a precoding technique. Note that the above related works only optimized the single objectives without the congestion issues, where one or several users may be not served with their demands and making conflicts to the remaining users.}
	
	\textcolor{black}{In this paper, we investigate a new category of multi-objective problems operating formulated for the MB-HTS GEO networks. The objective is to maximize the number of demand-satisfied users and the sum data throughput. By deploying the weighted sum method, a general model-based solution converts the multi-objective problem in the original form to a corresponding single-objective problem by balancing between the two utility metrics. Via exploiting the low-computational complexity of the linear beamforming methods, we can, therefore, design  low-cost algorithms for the considered MB-HTS systems by altering the general model-based solution. 
	To bring the framework closer to real time resource allocation, we further develop a data-driven solution where a supervised learning scheme is utilized to train a neural network and then predicting the transmit power assigned to the users and the demand-satisfied set as well. Our proposed algorithms are testified by a satellite beam-pattern from the European Space Agency \cite{ESA} that emulated a Defocused Phased Array Fed Reflector (PAFR) antenna array.
	Our solutions are compared with the previous approaches  \cite{lu2019robust,trinh2021user, Krivochiza:21:access} in the satellite communication literature with subject to both the data throughput and users' demand satisfaction.}
	
	\textcolor{black}{\textit{Notation}: Upper and lower bold letters to denote matrix and vectors. $\| \cdot \|$ denotes the Euclidean norm, while $| \mathcal{X} |$ is the cardinality of  set $\mathcal{X}$. The Hermitian and regular transposes are represented by $(\cdot)^H$  and $(\cdot)^T$. The circularly symmetric Gaussian distribution is  $\mathcal{CN}(\cdot, \cdot)$, while the expectation operator is $\mathbb{E}\{ \cdot\}$. The identity matrix of size $K \times K$ is $\mathbf{I}_K$. The element-wise inequality is $\succeq$ and $\mathbf{1}_K$ is a unit vector of length $K$. $\mathrm{tr}(\cdot)$ is the trace of a matrix.  The sets $\mathbb{R}$, $\mathbb{R}_{+}$, $\mathbb{R}_{++} = \mathbb{R}_{+} \cup \varnothing$, and $\mathbb{C}$ denote the  real, non-negative real,  extended non-negative real, and complex field respectively.}
	
	\section{MB-HTS System Model and Optimization Problem Formulation}\label{sec:sys_model}
	\subsection{GEO Satellite Model \& Channel Capacity}
	A broadband MB-HTS network is considered in the forward link where multiple users simultaneously joint the network with the same time and frequency resource. In the coverage area, we assume that there are $N$ overlapping beams providing services to a maximum of $N$ users. Since these users can be scheduled and served  by the satellite in each time slot, the actual users per scheduling instance is $K$ with $K \leq N$. Each user~$k$ with  $k \in \Kcal \triangleq \{1, 2, \dots, K\}$  and $|\Kcal| = K$ is simply denoted as $\UEk$. We define the vector $\bh_k\in\mathbb{C}^{N}$ to represent the propagation channel between the satellite and $\UEk$. Let us denote ${\bH} = [{\bh}_1,{\bh}_2,\dots,{\bh}_K] \in \mathbb{C}^{N\times K}$ the matrix that gathers the channel state information (CSI) as
		${\bH} = \bar{\bH}\mathbf{\Phi}$,
		where $\bar{\bH}\in\mathbb{R}_+^{N\times K}$ represents  practical features comprising, for example, the antenna radiation pattern of satellite antennas, additive noise, received antenna gain at users, and path loss from a long propagation distance. Specifically, the $(n,k)$-th element of $\bar{\bH}$ is given as
				\begin{equation}
		[\bar{\bH} ]_{nk} =\frac{\lambda\sqrt{G_R G_{nk}}}{4\pi d_k\sqrt{K_BTB}},
				\end{equation}
		where $\lambda$ denotes the wavelength of the carrier wave; $d_k$ represents the distance between $\UEk$ and the satellite; $G_R$ and $G_{nk}$ denotes the gains at the receiver antenna gain and from the $n$-th satellite feed towards $\UEk$, $\forall n = 1, \ldots N$; $K_B$ denotes the Boltzmann constant; $T$ represents  the noise temperature at the receiver. The diagonal matrix $\mathbf{\Phi}\in\mathbb{C}^{K\times K}$ stands for the signal phase rotations originated from the antenna architecture with 
	the $(k,l)$-th element defined as $[\mathbf{\Phi}]_{kk}= e^{j\phi_k}$,
		where $\phi_k$ denotes a residual random phase component from the satellite payload. Let us denote $s_k$  the data symbol that the network transmits to $\UEk$ with $\mathbb{E} \{ |s_k|^2 \} = 1$ and its transmit power $p_k \in \mathbb{R}_{+}$. Defining $\mathbf{w}_k \in \mathbb{C}^N$ the normalized precoding vector for $\UEk$ with $\| \bw_k\| = 1$, and then the transmitted signal $\mathbf{x} \in \mathbb{C}^N$ is formulated as
		\begin{equation} \label{eq:transsig}
	\mathbf{x} = \sum\limits_{k\in\Kcal} \sqrt{p_k} \mathbf{w}_k s_k.
		\end{equation}
	The received signal  $y_k \in \mathbb{C}$ at $\UEk$ is  given by
	\begin{equation} \label{eq:y}
		y_k = \mathbf{h}_k^H \mathbf{x} =  \sqrt{p_k} \bh_k^H\bw_k s_k + \sum\limits_{\ell\in\Kcal\backslash \{k\}} \sqrt{p_\ell} \bh_k^H\bw_\ell s_\ell + n_k,
	\end{equation}
	\textcolor{black}{where $n_k$ is additive noise with $n_k \sim \mathcal{CN}(0, \sigma^2)$. We assume that the perfect CSI available at the gateway, so the channel capacity of $\UEk$, measured in [Mbps], is given as
	\begin{equation}\label{eq:rate}
		\fontsize{10}{10}{R_k ( \{ p_{k'} \} ) = B \log_2 \Big(1+ \frac{p_k|\bh_k^H\bw_k|^2}{\sum_{\ell\in\Kcal\backslash \{k\}}p_\ell |\bh_k^H\bw_\ell|^2 +\sigma^2} \Big),} 
	\end{equation}
	where $\{ p_{k'} \} = \{ p_1, \ldots, p_K \}$ contains all the transmit powers and $B$~[MHz] denotes the system bandwidth utilized for the user link. The channel rate expression~\eqref{eq:rate} is applicable for an arbitrary channel model and precoding method.} 
	\subsection{Single Objective With Data Throughput Constraints}
	\textcolor{black}{Regarding the MB-HTS networks, conventional power control problems concentrating on maximizing a utility function, whilst maintaining the data throughput requirements of the users with a finite transmit power level. Considering the sum data throughput as an objective function, a popular optimization problem \cite{Bjon13:tcit} is formulated as
	\begin{subequations} \label{RelatedWork}
		\begin{alignat}{2}
			&\underset{\{ p_{k'} \in \mathbb{R}_{+} \}}\maxi &\quad & f_0(\{ p_{k'} \}) \triangleq  \sum\nolimits_{k\in\Kcal}R_k( \{ p_{k'} \} ) \label{RelatedWorka}\\
			&\mbox{subject to} && R_k(\{ p_{k'} \})  \geq \xi_k, \mbox{ } \forall k\in\Kcal, \label{RelatedWorkb}\\
			&&& \sum\nolimits_{k\in\Kcal} p_{k} \leq P_{\max}, \label{RelatedWorkc}
		\end{alignat}
	\end{subequations}
	where $\xi_k$ [Mbps] represents the data throughput required by $\UEk$ and $P_{\max}$ is the maximum transmit power that the satellite can provide. The network may not offer the data throughput requirements to every user, which leads to the congestion and at least one or several users served less data throughput than what they requested. The congestion makes problem~\eqref{RelatedWork} infeasible  because of an empty feasible domain. 
	\begin{figure}[t]
		\begin{minipage}{0.15\textwidth}
			\includegraphics[trim=0.cm 0.0cm 0.cm 0.8cm, clip=true, width=1.1in]{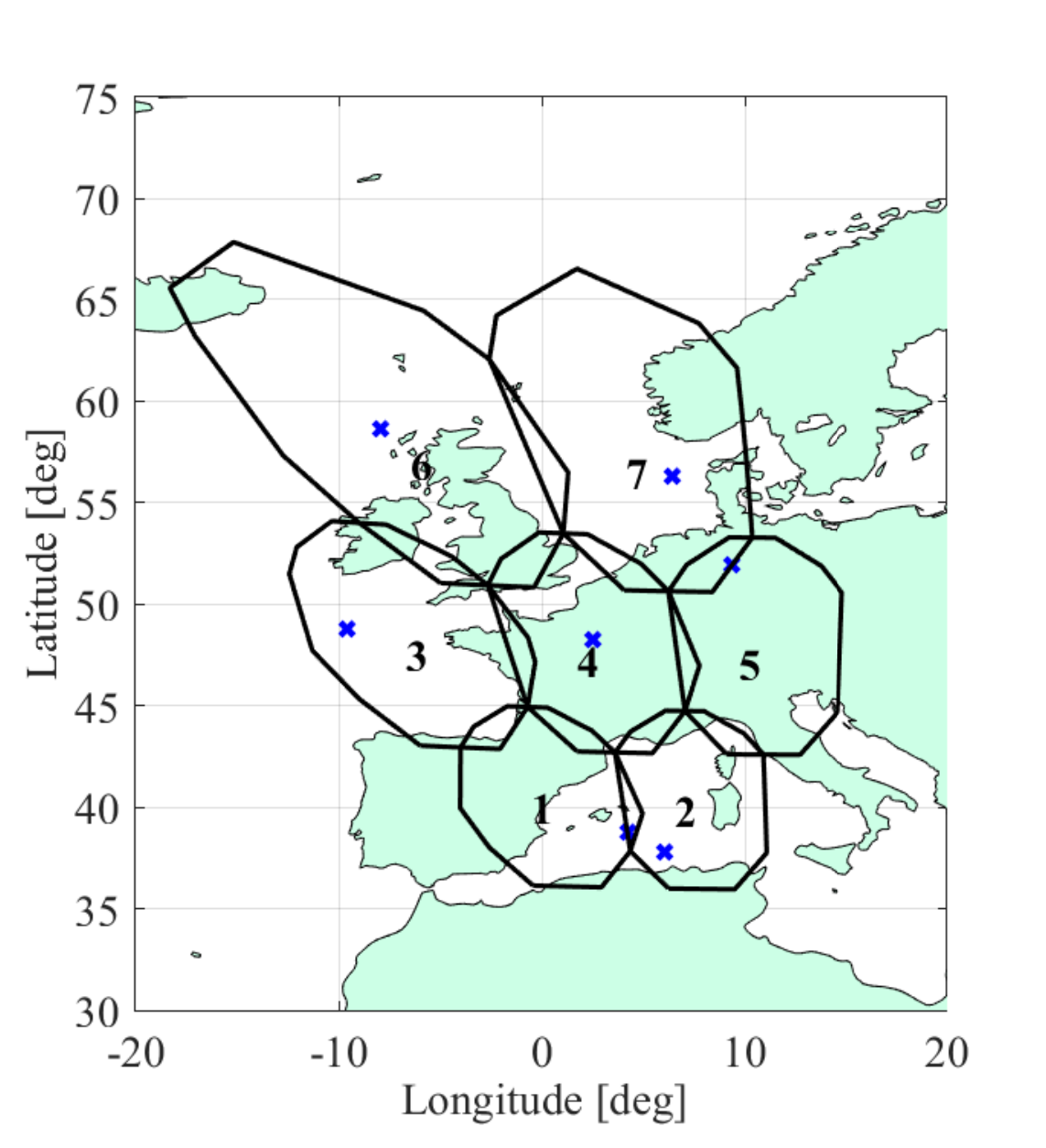} \\ 
			\centering \fontsize{8}{8}{$(a)$}
		\end{minipage}
		\begin{minipage}{0.15\textwidth}
			\includegraphics[trim=0.0cm 0.2cm 0.0cm 0.8cm, clip=true, width=1.1in]{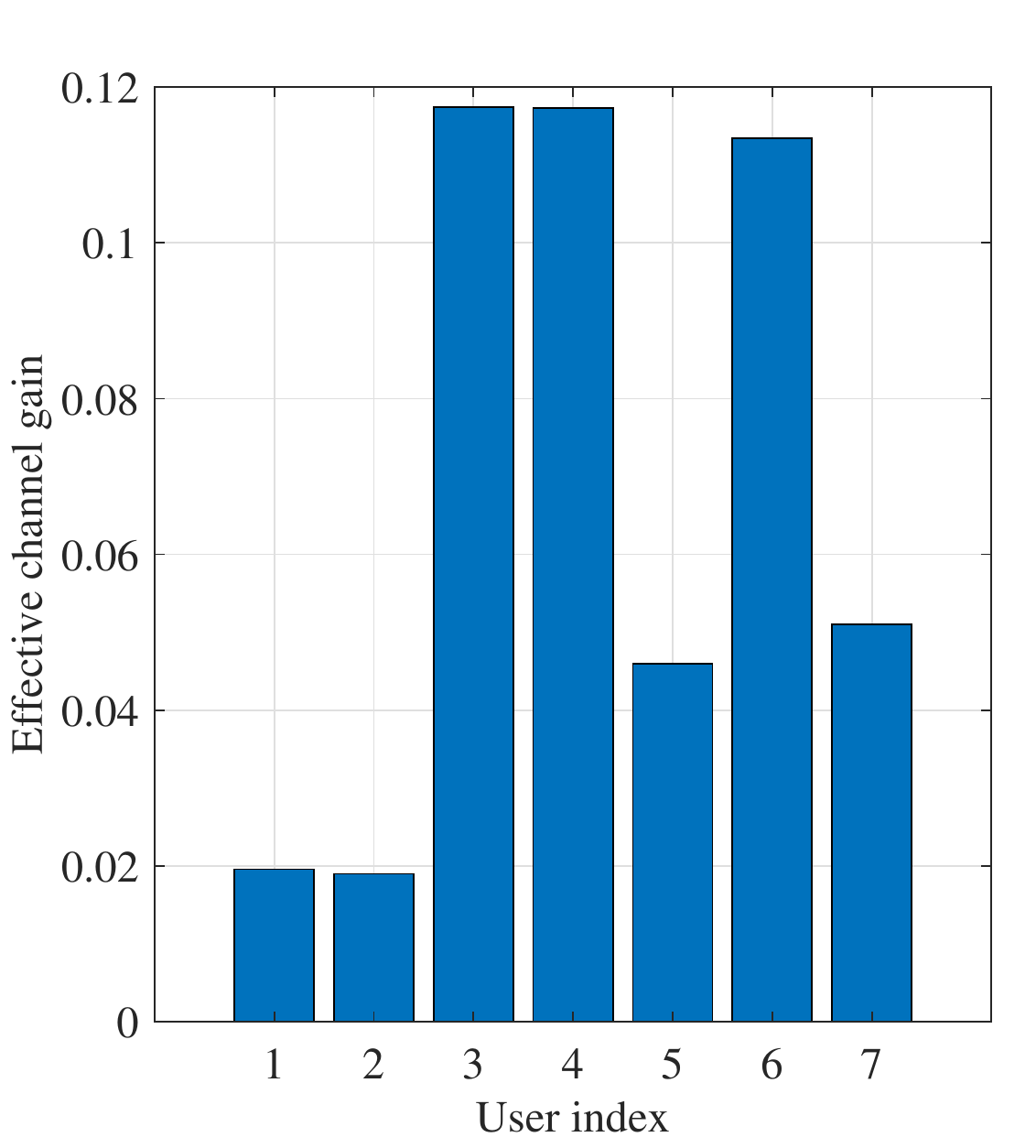}\\
			\centering \fontsize{8}{8}{$(b)$}
		\end{minipage}
		\begin{minipage}{0.15\textwidth}
			\includegraphics[trim=0.0cm 0.2cm 0.0cm 0.8cm, clip=true, width=1.1in]{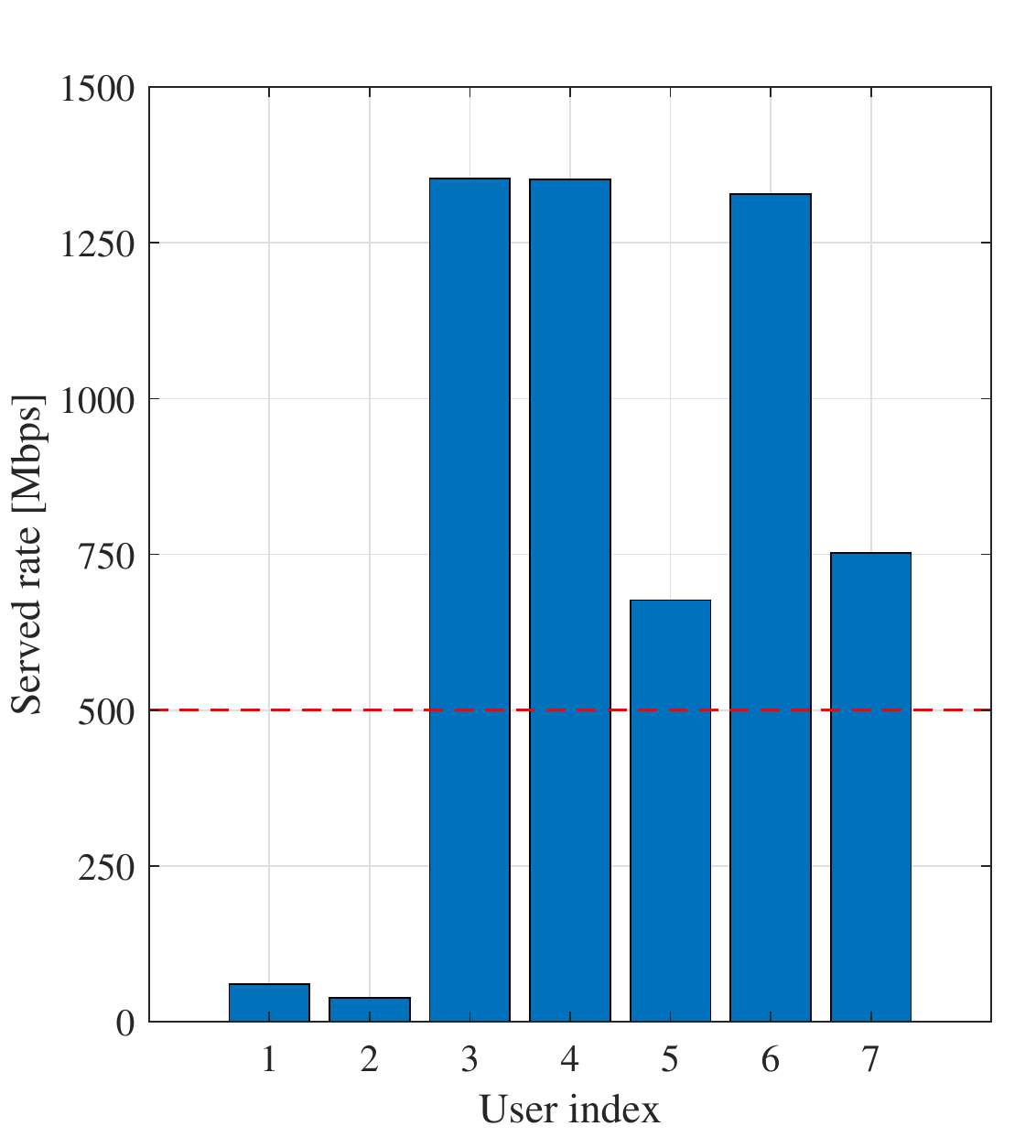} \\
			\centering \fontsize{8}{8}{$(c)$}
		\end{minipage}
		\caption{\textcolor{black}{A scheduling example where each beam serves one user: $(a)$ plots the user locations; $(b)$ plots the effective channel gains formulated as $|\mathbf{h}_k^H \mathbf{w}_k|^2, \forall k \in \Kcal$; and $(c)$ plot the served data throughput [Mbps] by the ZF precoding method. The detailed parameters are provided in Section~\ref{Sec:Results}.}}
		\label{Fig:ChannelvsRate}
	\end{figure}
	For tractability, one may formulate an optimization neglecting the data throughput constraints as}
	\begin{subequations} \label{RelatedWorkSumRate}
	\begin{alignat}{2}
				&\underset{\{ p_{k'} \in \mathbb{R}_{+} \}}\maxi &\quad & \sum\nolimits_{k\in\Kcal}R_k( \{ p_{k'} \} )\\
				&\mbox{subject to} && \sum\nolimits_{k\in\Kcal} p_{k} \leq P_{\max}, 
		\end{alignat}
	\end{subequations}
	which was addressed in \cite{aravanis2015power} and the mentioned references therein.
	Fig.~\ref{Fig:ChannelvsRate}(a) illustrates an instance of $N=7$ overlapping beams with $K=7$ users.  Fig.~\ref{Fig:ChannelvsRate}(c) displays the achievable data throughput for each of the users by solving problem~\eqref{RelatedWorkSumRate} utilizing the effective channel gains defined in Fig.~\ref{Fig:ChannelvsRate}(b). Notably, two users having very weak effective channel conditions along with a finite power resource will make it difficult for the satellite to guarantee all the users to be jointly served with the individual data throughput requirement, e.g., $500$ [Mbps]. Nonetheless, the remaining users still get at least their requested data throughput. 
	\subsection{Proposed Multi-Objective Optimization} \label{sec:MultiObject}
	\textcolor{black}{To handle the congestion issues, we divide the $K$ users into two sets: $\Qcal$ with  $\Qcal \subseteq \Kcal$ are the demand-satisfied user set containing users served by the
	data throughput at least their requirements. The remaining users are grouped in the demand-unsatisfied user set $\Kcal \setminus \Qcal$. Our target is to maximize the cardinality of  $\Qcal$ and further to find the maximum sum data throughput metric $\sum_{k\in\Kcal}R_k( \{ p_{k'} \} )$ as 
	\begin{subequations} \label{probGlobal}
		\begin{alignat}{4}
			&\underset{ \{ p_{k'} \in \mathbb{R}_{+} \}, \Qcal}{\mathrm{maximize}}&\ & \mathbf{g}\left(\{ p_{k'}\}, \Qcal \right) = \Big[ \sum\nolimits_{k\in\Kcal}R_k( \{ p_{k'} \} ),  |\Qcal| \Big]^T \label{probGlobala}\\
			&\mbox{subject to} && R_k( \{ p_{k'} \}) \geq \xi_k,  \forall k\in\Qcal, \label{probGlobalb}\\
			&&& \sum\nolimits_{k \in\Kcal} p_k \leq P_{\max},\label{probGlobalc}\\
			&&& \Qcal  \subseteq \Kcal. \label{probGlobald}
		\end{alignat}
	\end{subequations}
	Observing that \eqref{probGlobal} is always feasible because $\Qcal$ can span from an empty set to the demand-satisfied set $\mathcal{K}$. The constraints \eqref{probGlobalb} only ensure the individual data throughput requirements of $\Qcal$. Unlike a single objective function defined in \eqref{RelatedWork}, the decision space of problem~\eqref{probGlobal} is formulated as
	$\mathcal{D} =  \big\{ \{ p_{k'}\}, \Qcal  \big|  R_k( \{ p_{k'} \}) \geq \xi_k, \forall k\in\Qcal, 
	P_{\max} \geq \sum\nolimits_{k \in\Kcal} p_k , \Qcal \subseteq \Kcal \big\}$,
	which is non-convex. The data of \eqref{probGlobal} includes the decision space $\mathcal{D}$, the two objective functions contained in $\mathbf{g}\left(\{ p_{k'}\}, \Qcal \right)$, and the objective space $\mathbb{R}_{++}^2$. Mathematically, we map $\mathbf{g}\left(\{ p_{k'}\}, \Qcal \right)$ from the objective space to an ordered space, i.e., $(\mathbb{R}_{++}^2, \geq, \subseteq)$, where the feasibility is iteratively verified  by the order relations $\geq$ and $\subseteq$. The mapping is denoted as the $\theta$ model, describing a relation between the objective and order spaces, in which the maximization in \eqref{probGlobal} should be defined. Alternatively, \eqref{probGlobal} is explicitly characterized by the data $(\mathcal{D}, \mathbf{g}(\{ p_{k'} \}, \Qcal), R_{++}^2)$, the model map $\theta$, and the order space $\mathbb{R}_{++}^2$. We examine  an $\pmb{\epsilon}$-\textit{Pareto optimal} solution $\{ \{ p_{k'}^\ast \}, \Qcal^{\ast} \} \in \mathcal{D}$  to \eqref{probGlobal}, if there exists no $\{ \{ p_{k'} \}, \Qcal \} \in \mathcal{D}$ such that
	$\mathbf{g}\left(\{ p_{k'} \}, \Qcal \right) + \pmb{\epsilon} \succeq  \mathbf{g}\left(\{ p_{k'}^\ast \}, \Qcal^\ast \right)$, 
	where $\pmb{\epsilon} = [\epsilon_1, \epsilon_2]^T$ with $\epsilon_1, \epsilon_2 \in \mathbb{R}_+$ related to the accuracy of two considered objective functions.
	It indicates no better solutions $\{ \{ p_{k'} \}, \Qcal \} \in \mathcal{D}$ satisfied the conditions:
	\begin{equation}
		\fontsize{10}{10}{f_0\left(\{ p_{k'} \}, \Qcal \right) + \epsilon_1 \geq f_0 \left(\{ p_{k'}^\ast \}, \Qcal^\ast \right)  \mbox{;   } |\Qcal| + \epsilon_2 \geq |\Qcal^\ast|,}
	\end{equation}
	which exposes a trade-off between the two considered objective functions at the optimum. If $\epsilon_1 = \epsilon_2 = 0$, the above definition becomes an $\pmb{\epsilon}$-Pareto optimal solution
	that should be only enhanced by improving one objective function and reducing the other. Hence, an $\pmb{\epsilon}$-\textit{properly Pareto optimal solution} is defined as an $\pmb{\epsilon}$-Pareto optimal solution with a bound trade-off between the two considered objectives given in \eqref{probGlobala}. An $\pmb{\epsilon}$-\textit{Pareto dominant vector} is formulated as the objective functions in $\mathbf{g}(\{ p_{k'} \}, \Qcal)$ at the  $\pmb{\epsilon}$-properly Pareto optimal solution. We stresss that the $\pmb{\epsilon}$-\textit{Pareto frontier} gathers all the properly $\pmb{\epsilon}$-Pareto optimal vectors.}

	\section{Model-based and Data-driven Solutions} \label{Sec:ModelDL}
	\textcolor{black}{This section proposes the solutions to attain an $\pmb{\epsilon}$-properly Pareto optimum of problem~\eqref{probGlobal}.}
	\subsection{Model-based Solution}
	\textcolor{black}{We handle \eqref{probGlobal} by utilizing the weighted sum method \cite{ehrgott2005multicriteria}. Let us  define the weights $\mu_1 \geq 0$ and $\mu_2 \geq 0$ satisfying  $\mu_1 + \mu_2 = 1$ that shows the priority of the two objective functions in $\mathbf{g}(\{ p_{k'} \}, \Qcal)$. If $\{ \{ p_{k'}^{\ast} \}, \Qcal^\ast \}$ is an optimal solution to the following single-objective optimization problem:
	\begin{subequations} \label{probGlobalv1}
		\begin{alignat}{2}
			&\underset{\{ p_{k'} \in \mathbb{R}_{+} \}, \Qcal}{\mathrm{maximize}} \quad \mu_1 \sum\nolimits_{k\in\Kcal}R_k( \{ p_{k'} \} ) +  \mu_2 |\Qcal| \label{probGlobalv1a}\\
			&\mbox{subject to} \quad \ \eqref{probGlobalb}, \eqref{probGlobalc}, \eqref{probGlobald},
		\end{alignat}
	\end{subequations}
	for a given $\pmb{\epsilon}$-accuracy, and then $\{ \{ p_{k'}^{\ast} \}, \Qcal^\ast \}$  is an $\pmb{\epsilon}$-properly  Pareto  optimum to  problem~\eqref{probGlobal}. An $\pmb{\epsilon}$-Pareto frontier to \eqref{probGlobal} can be further attained by flexibly selecting  $\mu_1$ and $\mu_2$ 	in \eqref{probGlobalv1}.}
	\begin{assumption}\label{AssumpPrior}
	\textcolor{black}{The weights $\mu_1$ and $\mu_2$ are selected to attend the largest cardinality of the set $\Qcal$, i.e., offering the data throughput requirements for a maximum number of users, and then paying attention to maximize the sum-rate. The finite power budget and the weak channel gains may result in some  users served by the data throughput less than their demands. The network may improve the service for these unsatisfied users by efficiently using the power leftover, which has been allocated to the satisfied users with a higher served data throughput than requested.}
	\end{assumption}
	 \textcolor{black}{Assumption~\ref{AssumpPrior} justifies a priority on the data throughput requirements of the scheduled users, which is effectively attained by $\Qcal$. The limited transmit power should be, therefore, used in a strategy to maximize the service demands for all the users  rather than each single entity. The remaining power, if still available, will be exploited to maximize the sum data throughput.  Thanks to the Perron-Frobenius theorem \cite{pillai2005perron},  the conditions needed to all the   users with their data throughput satisfactions are given as in Theorem~\ref{Cardinality}.}
	\begin{theorem} \label{Cardinality}
		\textcolor{black}{When $\UEk$ requires a non-zero data throughput ($\xi_k > 0$), then the network can serve all the $K$ users with at least their individual data requirements if
		\fontsize{10}{10}{\begin{align}
				&\lambda(\mathbf{R} \mathbf{Q}) < 1, \label{eq:rho}\\
				& \mathbf{1}_K^T (\mathbf{I}_K - \mathbf{R}\mathbf{Q})^{-1} \pmb{\nu} \leq P_{\max}, \label{eq:PowerConstr}
		\end{align}}
		\hspace*{-3pt}in which $\pmb{\nu} = [\nu_1, \ldots, \nu_K]^T \in \mathbb{R}_{+}^K$, $\nu_k = \alpha_k \sigma^2 / ( (\alpha_k +1 ) |\mathbf{h}_k^2 \mathbf{w}_k|^2 )$, and $\alpha_k = 2^{\xi_k/B} -1, \forall k \in \Kcal$; $\mathbf{R} \in  \mathbb{R}^{K \times K}$ is a matrix whose  $(k,k')-$th element given as $[\mathbf{R}]_{kk'} = \frac{\alpha_k}{(\alpha_k +1)|\mathbf{h}_k^H \mathbf{w}_k|^2}$ if $k=k'$. Otherwise, $[\mathbf{R}]_{kk'} = 0$.  
		The $(k,k')$-th element of $ \mathbf{Q} \in  \mathbb{R}^{K \times K}$ is 
		$[\mathbf{Q}]_{kk'} = | \mathbf{h}_k^H \mathbf{w}_{k'} |^2$.
		In \eqref{eq:rho}, $\lambda(\mathbf{R} \mathbf{Q}) = \max \{ |\lambda_1|, \ldots, |\lambda_K| \}$ is the spectral radius of $\mathbf{R} \mathbf{Q}$ with the eigenvalues $\lambda_1, \ldots, \lambda_K$.}
	\end{theorem}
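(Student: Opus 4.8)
\textit{Proof strategy.}\ The plan is to convert the $K$ rate requirements into a single componentwise linear inequality on the power vector, and then invoke a standard nonnegative-matrix (Perron--Frobenius) argument to produce the unique power-minimal allocation that satisfies them.

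First I would rewrite each constraint $R_k(\{p_{k'}\})\ge\xi_k$. Since $R_k=B\log_2\big(1+\mathrm{SINR}_k\big)$ with $\mathrm{SINR}_k=p_k|\bh_k^H\bw_k|^2\big/\big(\sum_{\ell\in\Kcal\setminus\{k\}}p_\ell|\bh_k^H\bw_\ell|^2+\sigma^2\big)$, the requirement is equivalent to $\mathrm{SINR}_k\ge 2^{\xi_k/B}-1=\alpha_k$, i.e.
\begin{equation}
p_k|\bh_k^H\bw_k|^2\ \ge\ \alpha_k\Big(\sum\nolimits_{\ell\in\Kcal\setminus\{k\}}p_\ell|\bh_k^H\bw_\ell|^2+\sigma^2\Big).
\end{equation}
Adding $\alpha_k p_k|\bh_k^H\bw_k|^2$ to both sides to fold the desired-signal term into the full sum, then dividing by $(1+\alpha_k)|\bh_k^H\bw_k|^2>0$ (the denominators must be positive, since otherwise $R_k\equiv0<\xi_k$ and no allocation could ever serve $\UEk$), and stacking the $K$ scalar inequalities into vector form with $\bp=[p_1,\dots,p_K]^T$, I obtain the equivalent system
\begin{equation}
(\bI_K-\mathbf{R}\mathbf{Q})\,\bp\ \succeq\ \pmb{\nu},\qquad \bp\succeq\mathbf{0},
\end{equation}
where $\mathbf{R}$, $\mathbf{Q}$ and $\pmb{\nu}=\sigma^2\,\mathrm{diag}(\mathbf{R})\,\mathbf{1}_K$ are precisely the objects in the statement; note that $\mathbf{Q}$ keeps its diagonal $[\mathbf{Q}]_{kk}=|\bh_k^H\bw_k|^2$, which is exactly what supplies the $(1+\alpha_k)$ normalization appearing in $\mathbf{R}$.

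Next I would exploit that $\mathbf{T}\triangleq\mathbf{R}\mathbf{Q}$ is entrywise nonnegative. Under condition \eqref{eq:rho}, $\lambda(\mathbf{T})=\rho(\mathbf{T})<1$, so $\mathbf{T}^n\to\mathbf{0}$ and the Neumann series gives $(\bI_K-\mathbf{T})^{-1}=\sum_{n\ge0}\mathbf{T}^n\succeq\mathbf{0}$; in particular $\bp^\star\triangleq(\bI_K-\mathbf{R}\mathbf{Q})^{-1}\pmb{\nu}\succeq\mathbf{0}$ is a legitimate power vector satisfying $(\bI_K-\mathbf{T})\bp^\star=\pmb{\nu}$, i.e. all $K$ SINR constraints with equality. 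It is moreover the componentwise-smallest feasible vector: if $\bp\succeq\mathbf{0}$ obeys $\bp\succeq\mathbf{T}\bp+\pmb{\nu}$, applying the monotone map $\bx\mapsto\mathbf{T}\bx+\pmb{\nu}$ repeatedly (monotone because $\mathbf{T}\succeq\mathbf{0}$) yields $\bp\succeq\sum_{n=0}^{N}\mathbf{T}^n\pmb{\nu}+\mathbf{T}^{N+1}\bp$ for every $N$, and letting $N\to\infty$ gives $\bp\succeq\bp^\star$.

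Finally, since $\bp^\star$ is the minimal feasible power vector, the least total power needed to meet all demands is $\mathbf{1}_K^T\bp^\star=\mathbf{1}_K^T(\bI_K-\mathbf{R}\mathbf{Q})^{-1}\pmb{\nu}$; hence condition \eqref{eq:PowerConstr} guarantees $\mathbf{1}_K^T\bp^\star\le P_{\max}$, so $\bp^\star$ simultaneously satisfies every $R_k\ge\xi_k$ and the sum-power budget \eqref{probGlobalc}, which is exactly the claim. The main obstacle is the algebraic bookkeeping that turns the $K$ SINR inequalities into the exact matrix form $(\bI_K-\mathbf{R}\mathbf{Q})\bp\succeq\pmb{\nu}$ (especially tracking the $(1+\alpha_k)$ factor); once that normalization is pinned down, the Perron--Frobenius input — nonnegativity of $\mathbf{T}$, convergence of $\sum_n\mathbf{T}^n$, and $\mathbf{T}^{N+1}\bp\to\mathbf{0}$ under $\rho(\mathbf{T})<1$ — is routine.
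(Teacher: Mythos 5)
Your proof is correct and follows essentially the same route the paper intends: the paper omits the details but explicitly attributes the result to the standard Perron--Frobenius argument of \cite{pillai2005perron}, which is exactly your reduction of the SINR constraints to $(\bI_K-\mathbf{R}\mathbf{Q})\bp\succeq\pmb{\nu}$ followed by the Neumann-series inversion under $\lambda(\mathbf{R}\mathbf{Q})<1$ and the total-power check \eqref{eq:PowerConstr}. Your bookkeeping of the $(1+\alpha_k)$ factor (from keeping the diagonal of $\mathbf{Q}$) matches the theorem's definitions of $\mathbf{R}$ and $\pmb{\nu}$, so no gap remains.
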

	\begin{proof}
	\textcolor{black}{The proof is adopted from the previous works on single-wireless links \cite{pillai2005perron} to the MB-HTS systems. The detailed proof is omitted due to space limitations.}
	\end{proof}
	\textcolor{black}{Theorem~\ref{Cardinality} offers the criteria for the network to serve all the $K$  users with the data throughput requirements, while still optimizing $f_0 (\{ p_{k'}\})$. Different from the previous works,  \eqref{eq:rho} and \eqref{eq:PowerConstr} point out the existed unique power solution for a precoded satellite system as a multi-variate function of many variables comprising the channel information, the precoding method, the noise power, the data throughput requirements, and the transmit power. The total power required to satisfy the service demands is bounded from below  as in Corollary~\ref{CorrNoSatisfied}.}
	\begin{corollary}\label{CorrNoSatisfied}
		\textcolor{black}{For a given set of data throughput requirements from the $K$ users, the total transmit power consumption is bounded by 
		$\sum\nolimits_{k \in \Kcal} p_k \geq \mathbf{1}_K^T \pmb{\nu} / \| \mathbf{I}_K - \mathbf{R} \mathbf{Q} \|_2.$}
	\end{corollary}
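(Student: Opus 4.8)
The plan is to recycle the Perron--Frobenius fixed-point reformulation that already underlies Theorem~\ref{Cardinality}, and then finish with a single elementary norm estimate. First I would translate the requirement of serving all $K$ users into a linear form: by monotonicity of $\log_2(\cdot)$, each constraint $R_k(\{p_{k'}\})\ge\xi_k$ is equivalent to an SINR requirement $\mathrm{SINR}_k\ge\alpha_k$ with $\alpha_k=2^{\xi_k/B}-1$; clearing the denominator and adding $\alpha_k p_k|\mathbf{h}_k^H\mathbf{w}_k|^2$ to both sides turns it into an affine inequality in the power vector, and stacking the $K$ of them gives
\[
(\mathbf{I}_K-\mathbf{R}\mathbf{Q})\,\mathbf{p}\ \succeq\ \pnu,\qquad \mathbf{p}\triangleq[p_1,\dots,p_K]^T,
\]
which must hold for \emph{every} power allocation that meets all $K$ demands. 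Here $\pnu\succeq\mathbf{0}$ and $\mathbf{R}\mathbf{Q}$ is entrywise nonnegative, with $\mathbf{R},\mathbf{Q},\pnu$ exactly as in Theorem~\ref{Cardinality}.

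Second, I would invoke the hypothesis $\lambda(\mathbf{R}\mathbf{Q})<1$ of Theorem~\ref{Cardinality} --- precisely the regime in which all $K$ users can be served, so outside it the claimed bound is vacuous. Under it, $\mathbf{I}_K-\mathbf{R}\mathbf{Q}$ is a nonsingular $M$-matrix: the Neumann series $\sum_{n\ge0}(\mathbf{R}\mathbf{Q})^n$ converges to $(\mathbf{I}_K-\mathbf{R}\mathbf{Q})^{-1}$ and is entrywise nonnegative. Left-multiplying the displayed inequality by this nonnegative matrix preserves the componentwise order, so any feasible $\mathbf{p}$ obeys $\mathbf{p}\succeq\mathbf{p}^\star\triangleq(\mathbf{I}_K-\mathbf{R}\mathbf{Q})^{-1}\pnu\succeq\mathbf{0}$, i.e. the componentwise-minimal (hence also sum-minimal) power vector that already appears on the left of \eqref{eq:PowerConstr}.

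Third, from $\mathbf{p}\succeq\mathbf{p}^\star\succeq\mathbf{0}$ one gets $\sum_{k\in\Kcal}p_k=\mathbf{1}_K^T\mathbf{p}\ge\mathbf{1}_K^T\mathbf{p}^\star\ge\|\mathbf{p}^\star\|$, while the identity $\pnu=(\mathbf{I}_K-\mathbf{R}\mathbf{Q})\mathbf{p}^\star$ together with submultiplicativity of the matrix norm yields $\|\pnu\|\le\|\mathbf{I}_K-\mathbf{R}\mathbf{Q}\|_2\,\|\mathbf{p}^\star\|$, that is $\|\mathbf{p}^\star\|\ge\|\pnu\|/\|\mathbf{I}_K-\mathbf{R}\mathbf{Q}\|_2$. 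Combining the two chains gives $\sum_{k\in\Kcal}p_k\ge\|\pnu\|/\|\mathbf{I}_K-\mathbf{R}\mathbf{Q}\|_2$, and since $\pnu\succeq\mathbf{0}$ we may pass from $\|\pnu\|$ to $\mathbf{1}_K^T\pnu$ in the numerator --- equivalently, carry the matrix norm subordinate to the $\ell_1$ vector norm throughout, for which $\|\pnu\|_1\le\|\mathbf{I}_K-\mathbf{R}\mathbf{Q}\|_1\|\mathbf{p}^\star\|_1$ and $\mathbf{1}_K^T\mathbf{p}^\star=\|\mathbf{p}^\star\|_1$ hold directly --- and then invoke the standard equivalence of norms on $\mathbb{R}^K$. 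This delivers the asserted bound $\sum_{k\in\Kcal}p_k\ge\mathbf{1}_K^T\pnu/\|\mathbf{I}_K-\mathbf{R}\mathbf{Q}\|_2$.

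The substantive content is inherited rather than new: the affine/fixed-point reformulation of the QoS constraints and the nonnegativity of $(\mathbf{I}_K-\mathbf{R}\mathbf{Q})^{-1}$ are exactly the machinery of Theorem~\ref{Cardinality}, and the remainder is manipulation of entrywise-nonnegative matrices. I expect the only point that needs care to be the last normalization step, where the cleanest submultiplicativity estimate naturally produces the Euclidean norm $\|\pnu\|$ paired with the spectral norm, rather than $\mathbf{1}_K^T\pnu$; landing exactly on the stated form requires either working from the outset with the $\ell_1$-subordinate matrix norm or absorbing a dimensional constant, so the inequality is loose in general (and tight in the zero-interference / zero-forcing limit with equal rate targets, where $\mathbf{R}\mathbf{Q}$ is diagonal and both sides coincide).
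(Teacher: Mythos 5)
Your setup is sound and matches the machinery behind Theorem~\ref{Cardinality}: stacking the demand constraints gives $(\mathbf{I}_K-\mathbf{R}\mathbf{Q})\mathbf{p}\succeq\pnu$, and for $\lambda(\mathbf{R}\mathbf{Q})<1$ the Neumann series makes $(\mathbf{I}_K-\mathbf{R}\mathbf{Q})^{-1}$ entrywise nonnegative, so every feasible power vector obeys $\mathbf{1}_K^T\mathbf{p}\ge\mathbf{1}_K^T(\mathbf{I}_K-\mathbf{R}\mathbf{Q})^{-1}\pnu$, i.e.\ the left-hand side of \eqref{eq:PowerConstr}, which is indeed the quantity the paper says it lower-bounds (its own proof is omitted beyond that remark). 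The gap is in your final norm step. Submultiplicativity with the spectral norm only gives $\sum_{k\in\Kcal}p_k\ge\|\pnu\|_2/\|\mathbf{I}_K-\mathbf{R}\mathbf{Q}\|_2$, and you cannot ``pass from $\|\pnu\|$ to $\mathbf{1}_K^T\pnu$ in the numerator since $\pnu\succeq\mathbf{0}$'': nonnegativity gives $\mathbf{1}_K^T\pnu=\|\pnu\|_1\ge\|\pnu\|_2$, so that replacement \emph{strengthens} the claim rather than following from it. Your fallback routes do not close the gap either: the $\ell_1$-subordinate estimate produces the denominator $\|\mathbf{I}_K-\mathbf{R}\mathbf{Q}\|_1$, which is not dominated by $\|\mathbf{I}_K-\mathbf{R}\mathbf{Q}\|_2$ in general, and invoking norm equivalence costs a factor $\sqrt{K}$. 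So as written the corollary is not established, and your closing hedge that the stated form may require ``absorbing a dimensional constant'' is incorrect --- the bound holds exactly as stated.

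The missing idea is to exploit the nonnegativity of the inverse \emph{columnwise} instead of applying a global norm estimate to $\pnu$. Write $\mathbf{A}=\mathbf{I}_K-\mathbf{R}\mathbf{Q}$. Each column $\mathbf{A}^{-1}\mathbf{e}_j$ is entrywise nonnegative, so its column sum equals its $\ell_1$-norm and
\begin{equation*}
\mathbf{1}_K^T\mathbf{A}^{-1}\mathbf{e}_j=\|\mathbf{A}^{-1}\mathbf{e}_j\|_1\ \ge\ \|\mathbf{A}^{-1}\mathbf{e}_j\|_2\ \ge\ \sigma_{\min}\big(\mathbf{A}^{-1}\big)\,\|\mathbf{e}_j\|_2=1/\|\mathbf{A}\|_2 .
\end{equation*}
Weighting by $\nu_j\ge 0$ and summing over $j$ yields $\mathbf{1}_K^T\mathbf{A}^{-1}\pnu=\sum_{j}\nu_j\,\mathbf{1}_K^T\mathbf{A}^{-1}\mathbf{e}_j\ge \mathbf{1}_K^T\pnu/\|\mathbf{A}\|_2$, which combined with your first two steps gives the corollary with no dimensional constant. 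The crucial point is that each $\mathbf{e}_j$ is an $\ell_2$ unit vector, whereas $\pnu$ as a whole only satisfies $\|\pnu\|_2\le\mathbf{1}_K^T\pnu$; your estimate bundled $\pnu$ into a single Euclidean norm and thereby lost exactly the factor you were forced to hedge about.
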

	\begin{proof}
		\textcolor{black}{The proof derives a lower bound of  \eqref{eq:PowerConstr} and is omitted due to space limitations.}
	\end{proof}
	\begin{algorithm}[t]
		\begin{algorithmic}[1]\fontsize{9}{9}\selectfont
			\protect\caption{\textcolor{black}{A sub-optimal solution to problem \eqref{probGlobal}}}
			\label{alg:glob_alg}
			\global\long\def\algorithmicrequire{\textbf{INPUT:}}
			\REQUIRE \textcolor{black}{Propagation channels $\{\bh_{k'}\}$; Maximum transmit power $P_{\max}$;  Data throughput requirement (QoS) set $\{\xi_{k'}\}$.}
			\STATE \textcolor{black}{Formulate the precoding vectors $\{\bw_{k'}\}$ based on $\{ \mathbf{h}_{k'} \}$.}
			\STATE \textcolor{black}{Define the matrices $\mathbf{R}, \mathbf{Q},$ and vector $\pmb{\nu}$.} 
			\IF {\textcolor{black}{ \eqref{eq:rho} and \eqref{eq:PowerConstr} are satisfied}}
			\STATE \textcolor{black}{Update the demand-satisfied set $\Qcal^\ast = \Kcal$ and solve \eqref{RelatedWork} to attain  $\{ p_{k'}^{\ast} \}$.}
			\ELSE
			\STATE \textcolor{black}{Solve \eqref{RelatedWorkSumRate} to attain $\{ p_{k'}^{\ast,(0)} \}$ and  $\Qcal^{\ast,(0)}$.}
			\STATE \textcolor{black}{Initialize setting $\delta = |\Qcal^{\ast, (0)}|$ and $n=0$.}
			\WHILE {$\delta \neq 0$} 
			\STATE \textcolor{black}{Update iteration index $n = n+1$.}	
			\STATE \textcolor{black}{Solve \eqref{alg_SRM:phase23} to attain $\{ p_{k'}^{\ast, (n)} \}$ and then update $\Qcal^{\ast, (n)}$.} 
			\STATE \textcolor{black}{Recalculate $\delta = |\Qcal^{\ast, (n)}| - |\Qcal^{\ast, (n-1)}|$.} 
			\ENDWHILE
			\ENDIF
			\global\long\def\algorithmicrequire{\textbf{OUTPUT:}}
			\REQUIRE  \textcolor{black}{The demand-satisfied set $\Qcal^{\ast} = \Qcal^{\ast, (n)}$ and the optimized transmit powers $ \{p_{k'}^\ast \} = \{p_{k'}^{\ast, (n)} \}$.}
		\end{algorithmic}
	\end{algorithm}
	\textcolor{black}{The lower bound in Corollary~\ref{CorrNoSatisfied} demonstrates that the total power consumption is always positive if each user request a non-zero data throughput because of the interference and noise. In addition, it shows the contributions of the precoding method. A proper exploitation can effectively reduce the interference among the  users to obtain the large spectral norm of $\mathbf{I}_K - \mathbf{R} \mathbf{Q}$. For balancing between the computational complexity and the system performance, a linear precoding method can be deployed, e.g.,
	\begin{equation}
		\fontsize{10}{10}{\bW = \begin{cases}
				\bH (\bH^H\bH )^{-1}, & \mbox{For ZF},\\
				\bH \left( \bH^H\bH + \frac{K\sigma^2}{P_{\max}} \bI_K \right)^{-1}, & \mbox{For RZF},
		\end{cases}}
	\end{equation}
	then the precoding vector used for  $\UEk$ is formulated as $\mathbf{w}_k = [\bW]_k / \| \bW\|_k$ with $[\bW]_k$ being the $k$-th column of matrix $\bW$. Conditioned on the set of precoding vectors $\{\bw_{k'}\}$ defined from the spropagation channels $\{\bh_{k'}\}$, \eqref{eq:rho} and \eqref{eq:PowerConstr} lead to two possible cases: If these conditions hold, then $R_k(\{p_{k'}\}) \geq  \xi_k, \forall k\in\Kcal$, and $\Qcal = \Kcal$. Based on Theorem~\ref{Cardinality} and Assumption~\ref{AssumpPrior}, problems~\eqref{probGlobalv1} and \eqref{RelatedWork} are equivalent to each other. Accordingly, it provides a nonempty feasible set and a network without congestion. Another case is  as one of these conditions does not hold, resulting in one   user does not satisfy its data throughput requirement, and therefore the congestion happens. A particular deal should be considered to tackle this issue once considers the traditional sum data throughput optimization \eqref{RelatedWork} because of an empty feasible set. Nevertheless, it is not such the situation for our proposed problem in \eqref{probGlobalv1}. We emphasize that the first case optimizes the sum data throughput with the demand-based constraints of all the $K$   users and a finite transmit power level. In order for the network to solve problem~\eqref{probGlobalv1}, the priority is to maximize the number of demand-satisfied users by maximizing the cardinality of $\Qcal$ as}
	\begin{subequations} \label{alg_SRM:phase21}
		\begin{alignat}{2}
			&\underset{\{ p_{k'} \in \mathbb{R}_{+} \}}{\mathrm{maximize}} \quad   |\Qcal|\\
			&\mbox{subject to} \quad \  \eqref{probGlobalb}, \eqref{probGlobalc}, \eqref{probGlobald}.
		\end{alignat}
	\end{subequations}
	\textcolor{black}{Because \eqref{alg_SRM:phase21} is a non-convex problem, the global optimum is nontrivial to obtain. We, therefore, propose an iterative algorithm to overcome this issue with a good sub-optimal solution. First, we solve the sum data throughput optimization problem without the demand-based constraints in \eqref{RelatedWorkSumRate} to attain an initial transmit power coefficients $\{ p_{k'}^{\ast, (0)}\}$, which is later utilized to formulate the initial demand-satisfied set $\Qcal^{\ast, (0)}$ as
	$\fontsize{10}{10}{\Qcal^{\ast, (0)} = \big\{ k \big| R_k \big( \{ p_{k'}^{\ast, (0)}\} \big) \geq \xi_k, k \in \Kcal \big\}. }$
	Notably, the users typically satisfying their data throughput  requirements are ones with good channel conditions and these suffering less from interference contributing significantly to the objective function of  \eqref{RelatedWorkSumRate}. 
	Next, we construct an approach, which can expand  $\Qcal$ iteratively. The key idea is that the demand-satisfied users in $\Qcal$ are only offered service equal to the data throughput requirements, all the remaining transmit power will be reallocated to the other users to upgrade their service. Consequently, there is an opportunity for these users to join $\Qcal$. We focus on the following problem at iteration~$n$ as
	\begin{subequations} \label{alg_SRM:phase23}
		\begin{alignat}{2}
			&\underset{\{ p_{k'}^{(n)} \in \mathbb{R}_{+} \}}{\mathrm{maximize}} &\quad& \sum\nolimits_{k\in\Kcal}R_{k} \big(\{p_{k'}^{(n)}\} \big) \label{alg_SRM:phase23a}\\
			&\mbox{subject to} &\quad& R_k(\{ p_{k'}^{(n)} \} ) = \xi_k, \forall k \in\Qcal^{\ast, (n-1)}, \label{alg_SRM:phase23b}\\
			&&& \sum\nolimits_{k\in\Kcal}p_k^{(n)} \leq P_{\max}, \label{alg_SRM:phase23c}
		\end{alignat}
	\end{subequations}
	with the optimal transmit power $\{ p_{k'}^{\ast,(n)} \}$. The constraints \eqref{alg_SRM:phase23b} makes the GEO satellite  serve the demand-satisfied users  in $\Qcal$ with only their data throughput requirements.  
	The remaining transmit power will be reallocated to the users with weak channel conditions to boost an opportunity that they can be potential candidates of $\Qcal$. If users are served by at least their data throughput requirements at iteration~$n$, they will be joined to $\Qcal$ as
	$\fontsize{9}{9}{\Qcal^{\ast, (n)} =  \big\{ k \big| R_k \big(\{ p_{k'}^{\ast, (n)} \big) \geq \xi_k, k \in \Kcal \big\}. }$
Following, the iteration index is updated as $n=n+1$. 
We should note that our proposal maximizes the number of demand-satisfied  users in each iteration with the objective to maximize the sum data throughput of all the $K$ users. The proposed design is given in Algorithm \ref{alg:glob_alg} and its convergence given in Theorem~\ref{theorem:Coverge}.}
		\begin{figure*}[ht]
	\centering
	\includegraphics[trim=0.0cm 0cm 4.8cm 0.0cm, clip=true, width=7.2in, width= 0.92 \textwidth]{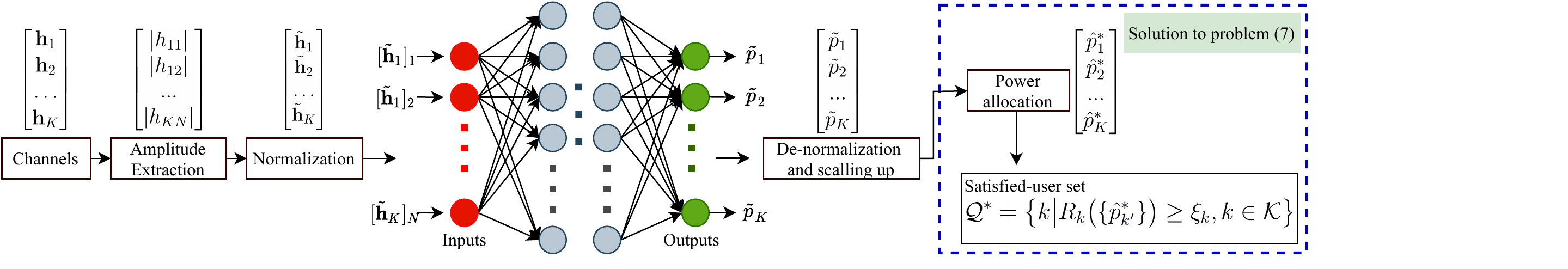}
	\caption{\textcolor{black}{The proposed fully-connected  neural network architecture to learn and jointly predict the transmit powers and demand-satisfied set.}}
	\label{fig:DNN_model}
\end{figure*}
	\begin{theorem} \label{theorem:Coverge}
		\textcolor{black}{As the $K$ users are served under their data throughput requirements with a given transmit power level $P_{\max}$ at the satellite,  the convergence properties are}
		\fontsize{10}{10}{\begin{align}
				\ldots & \geq |\Qcal^{\ast,(n)}| \geq |\Qcal^{\ast,(n-1)}| \geq \ldots \geq |\Qcal^{\ast,(0)}|, \label{eq:QSeries}\\
				\ldots & \leq \sum\nolimits_{k \in \Kcal} R_k \big( \{ p_{k'}^{\ast,(n)} \} \big) \leq \sum\nolimits_{k \in \Kcal} R_k \big( \{ p_{k'}^{\ast,(n-1)} \} \big) \nonumber \\
				&\leq  \ldots \leq \sum\nolimits_{k \in \Kcal} R_k \big( \{ p_{k'}^{\ast,(0)} \} \big). \label{eq:RSeries}
		\end{align}}
	\end{theorem}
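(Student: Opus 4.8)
The plan is to prove the two monotone chains in Theorem~\ref{theorem:Coverge} by induction on the iteration index $n$, exploiting the nesting structure of the demand-satisfied sets and the equality constraints~\eqref{alg_SRM:phase23b}. The key observation, which I would establish first, is a \emph{feasibility-carryover lemma}: the optimal power vector $\{p_{k'}^{\ast,(n-1)}\}$ from iteration $n-1$ is a feasible point for problem~\eqref{alg_SRM:phase23} at iteration $n$. Indeed, by the definition of $\Qcal^{\ast,(n-1)}$, every $k\in\Qcal^{\ast,(n-1)}$ satisfies $R_k(\{p_{k'}^{\ast,(n-1)}\})\ge\xi_k$; however, to satisfy the \emph{equality} constraint~\eqref{alg_SRM:phase23b} one must first argue that at the optimum of~\eqref{alg_SRM:phase23} (or~\eqref{RelatedWorkSumRate} at $n=1$) the satisfied users are served at exactly their requirement, or else the residual power can be shifted. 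I would handle this by noting that the algorithm enforces~\eqref{alg_SRM:phase23b} as equality from iteration $1$ onward, so for $n\ge 2$ the carryover is immediate; for $n=1$, any user in $\Qcal^{\ast,(0)}$ with $R_k>\xi_k$ can have its power reduced to meet equality, and this reduction only frees power, preserving~\eqref{alg_SRM:phase23c}, so a feasible point for~\eqref{alg_SRM:phase23} at $n=1$ exists whose rates dominate the requirement on $\Qcal^{\ast,(0)}$.

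Granting the carryover lemma, the chain~\eqref{eq:RSeries} follows directly: since $\{p_{k'}^{\ast,(n-1)}\}$ (or its power-reduced variant at $n=1$) is feasible for~\eqref{alg_SRM:phase23} at iteration $n$, and $\{p_{k'}^{\ast,(n)}\}$ is the \emph{maximizer} of the sum-rate objective~\eqref{alg_SRM:phase23a} over that feasible set, we get $\sum_{k\in\Kcal}R_k(\{p_{k'}^{\ast,(n)}\})\le\sum_{k\in\Kcal}R_k(\{p_{k'}^{\ast,(n-1)}\})$ --- wait, the inequality in~\eqref{eq:RSeries} runs the other way; I should state it carefully. The sum-rate is in fact \emph{non-increasing} in $n$ precisely because each new iteration imposes an additional binding equality constraint on a (weakly) larger set $\Qcal^{\ast,(n-1)}$, shrinking the feasible region relative to the unconstrained problem~\eqref{RelatedWorkSumRate}; since~\eqref{RelatedWorkSumRate} has the largest feasible set, its optimum $\sum_k R_k(\{p_{k'}^{\ast,(0)}\})$ is the global maximum, and each subsequent constrained maximum is no larger. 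The nesting $\Qcal^{\ast,(n-1)}\subseteq\Qcal^{\ast,(n)}$ (which I prove next) then gives feasible-region nesting for the rate problems, yielding the full descending chain~\eqref{eq:RSeries}.

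For~\eqref{eq:QSeries}, I would argue that $\Qcal^{\ast,(n-1)}\subseteq\Qcal^{\ast,(n)}$. Take any $k\in\Qcal^{\ast,(n-1)}$. Constraint~\eqref{alg_SRM:phase23b} at iteration $n$ forces $R_k(\{p_{k'}^{\ast,(n)}\})=\xi_k\ge\xi_k$, so $k$ meets the membership test for $\Qcal^{\ast,(n)}$, hence $k\in\Qcal^{\ast,(n)}$. Taking cardinalities gives $|\Qcal^{\ast,(n)}|\ge|\Qcal^{\ast,(n-1)}|$, which is~\eqref{eq:QSeries}. Finally, both chains are bounded --- $|\Qcal^{\ast,(n)}|\le K$ and, since $\Qcal^{\ast,(n)}$ is a strictly increasing sequence of subsets of the finite set $\Kcal$ unless it stabilizes, the stopping criterion $\delta=|\Qcal^{\ast,(n)}|-|\Qcal^{\ast,(n-1)}|=0$ is reached in at most $K$ iterations --- so the algorithm terminates, and~\eqref{eq:RSeries} is bounded below by $0$. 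I expect the main obstacle to be the $n=1$ base case of the carryover argument: one must verify rigorously that reducing the power of over-satisfied users in $\Qcal^{\ast,(0)}$ to exact equality yields a point that is still feasible for~\eqref{alg_SRM:phase23} \emph{and} does not decrease any other user's rate below what is needed --- in fact other users' rates can only improve when interference is reduced, but making this precise requires the monotonicity of $R_k$ in the competitors' powers, which I would invoke from the SINR structure in~\eqref{eq:rate}. A secondary subtlety is ensuring that the equality constraints~\eqref{alg_SRM:phase23b} remain feasible at every iteration, i.e., that the growing set of rate-equalities plus the power budget never produces an empty region; this follows because the previous iterate is always a witness point, closing the induction.
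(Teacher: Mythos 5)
Your core monotonicity argument is sound and is, as far as one can tell, the same route the paper intends (its proof is omitted and described only as "testifying the monotonic property" of the two metrics): the equality constraints \eqref{alg_SRM:phase23b} force every member of $\Qcal^{\ast,(n-1)}$ to meet its demand under $\{p_{k'}^{\ast,(n)}\}$, hence $\Qcal^{\ast,(n-1)}\subseteq\Qcal^{\ast,(n)}$ and \eqref{eq:QSeries}; and since the equality constraints are imposed on a (weakly) growing set while the power budget is unchanged, the feasible regions of the successive problems \eqref{alg_SRM:phase23} are nested inside each other and inside that of \eqref{RelatedWorkSumRate}, so the optimal sum rates are non-increasing, which is \eqref{eq:RSeries}.

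The genuine gap is in your feasibility-carryover lemma, on which both chains silently rest (without feasibility of \eqref{alg_SRM:phase23} at every iteration, neither $\{p_{k'}^{\ast,(n)}\}$ nor $\Qcal^{\ast,(n)}$ exists and "the optimum is no larger" is vacuous). Your claim that "for $n\ge 2$ the carryover is immediate" is false: at iteration $n-1$ equality is enforced only on $\Qcal^{\ast,(n-2)}$, so the users newly admitted into $\Qcal^{\ast,(n-1)}$ typically have $R_k(\{p_{k'}^{\ast,(n-1)}\})>\xi_k$ strictly, and the previous iterate violates the new equality constraints at \emph{every} iteration in which $\Qcal$ grew --- which is exactly the nontrivial case. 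Moreover, the one-shot fix you propose for $n=1$ (reduce the powers of over-satisfied users until they hit $\xi_k$) does not produce a feasible point either: lowering those powers reduces interference and pushes the users that must sit exactly at $\xi_k$ strictly above their targets, breaking equality elsewhere; chasing this effect is a fixed-point problem, not a single power adjustment. The clean repair is the Perron--Frobenius / standard-interference-function machinery already present in Theorem~\ref{Cardinality}: restrict the SINR-equality system to the users in $\Qcal^{\ast,(n-1)}$ (e.g., with the powers of the unsatisfied users set to zero, or held fixed); because the targets are met with inequality at the previous iterate, the corresponding spectral-radius condition analogous to \eqref{eq:rho} holds for that subsystem, the equality system has a unique nonnegative solution with componentwise no larger powers, and the budget \eqref{alg_SRM:phase23c} is respected as in \eqref{eq:PowerConstr}. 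With that witness constructed at every iteration, your induction closes and the rest of your argument goes through as written.
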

	\begin{proof}
	\textcolor{black}{The proof is to testify the monotonic property of the sum data throughput and the cardinality of  demand-satisfied set and is omitted due to space limitations.}
	\end{proof}
	\textcolor{black}{We have pointed out in Theorem~\ref{theorem:Coverge} that an improvement of the demand-satisfied set after each iteration can be attained by sacrificing the sum-data throughput based on the $\pmb{\epsilon}$-properly Pareto optimality presented in Section~\ref{sec:MultiObject}. Once the congestion happens, the $K$ users must be chopped into the two sets: the demand-satisfied set $\Qcal$ including the users with at least their data throughput demands, and the demand-unsatisfied set $\Kcal \setminus \Qcal$ with the remaining users with the data throughput below their requirements.}
	\begin{remark}
	   \textcolor{black}{As a priority,	Algorithm~\ref{alg:glob_alg} first maintains the data throughput requirements for the users under a finite transmit power budget. Alternately, it strategically allocates the transmit power to achieve the maximum number of demand-satisfied user, and then the sum data throughput maximization is considered. Our framework gives room for the decision maker to design specific approaches in order to solve the sum data throughput optimization problems~\eqref{RelatedWorkSumRate} and \eqref{alg_SRM:phase23}. However, this paper employs the semi-closed form solution of the water-filling method to acquire a cost-effective design.}
	\end{remark}
	\subsection{Data-Driven Solution} \label{Sec:DataDriven}
	\textcolor{black}{This subsection utilizes a neural network  to learn the characteristics of Algorithm~\ref{alg:glob_alg} and predict its solution with low complexity. 
The optimized power solution provided by Algorithm~\ref{alg:glob_alg} is assumed to be available and the series of continuous mappings are formulated as follows
	\begin{align} 
		\mathbf{w}_{\ell} =& \tilde{\mathbf{f}}_\ell ( \{ \mathbf{h}_k \} ),\  \forall \ell \in \Kcal, \label{eq:well}\\
		\mu_{kl}  =& |\bh_k^H\bw_\ell|^2,\  \forall k,\ell \in \Kcal, \label{eq:M1}\\
		\alpha_{k}^{\ast}  =& \frac{p_k^\ast  \mu_{kk} }{\sum_{\ell\in\Kcal\backslash \{k\}}p_{\ell}^\ast  \mu_{kl}  +\sigma^2} ,\  \forall k \in \Kcal, \label{eq:M2}\\
		p_k^{\ast}  =&  \alpha_{k}^{\ast}  \frac{\sigma^2}{\mu_{kk}}  + \alpha_{k}^{\ast} 	\sum\nolimits_{\ell \in \mathcal{K} \setminus \{k\} } p_{\ell }^\ast \frac{\mu_{kl} }{\mu_{kk} },\forall  k \in \Kcal,\label{eq:M3}
	\end{align}
where $\tilde{\mathbf{f}}_\ell ( \{ \mathbf{h}_k \} )$ denotes a function that defines a precoding vector user~$\ell$ from the channel information. After the process in \eqref{eq:well}, the  $K$ precoding vectors are defined, which are used as the input to compute the gains or mutual interference in the mapping \eqref{eq:M1} corresponding $k = l$ or $k \neq l$, respectively. 	The continuous mapping \eqref{eq:M2} computes the SINR  for  user~$k, \forall k$. We notice that the optimized demand-satisfied set $\Qcal^\ast$ should be explicitly formulated by $\{ p_k^\ast \}$ together with $\{ \alpha_k^\ast \}$ in \eqref{eq:M2}. From the series of continuous mappings in \eqref{eq:well}--\eqref{eq:M3}, a low-cost  neural network can be constructed by a limited number of neurons for our problem. Furthermore, \eqref{eq:M3} indicates a method to update the transmit power of $\UEk$ in relation to the offered data throughput and the power allocated to the other users.}
	\begin{lemma} \label{lemmaExistNeural}
		 \textcolor{black}{Algorithm~\ref{alg:glob_alg}  yields the optimized transmit power coefficients, which can be characterized by the process
		$\{ p_k \} = \mathcal{F} ( \{ \mathbf{h}_k \} )$,
		in which $\mathcal{F} ( \{ \mathbf{h}_k \} )$ denotes the series of the continuous mappings in \eqref{eq:well}--\eqref{eq:M3}. It means that  a neural network can be trained to learn the mapping $\mathcal{F} ( \{ \mathbf{h}_k \} )$.}
	\end{lemma}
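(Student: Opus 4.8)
The plan is to establish the lemma in two stages: first, that the powers returned by Algorithm~\ref{alg:glob_alg} coincide with the fixed point of the coupled relations \eqref{eq:well}--\eqref{eq:M3}, so that the algorithm's input--output behaviour factors through the map $\mathcal{F}$; and second, that $\mathcal{F}$ is continuous on the operating domain, so that the universal approximation theorem guarantees a feedforward network can learn it. The entry point is the observation that \eqref{eq:M2}--\eqref{eq:M3} merely rewrite the SINR definition in \eqref{eq:rate} together with the power-balancing conditions that any optimizer of the inner problems \eqref{RelatedWork} / \eqref{alg_SRM:phase23} satisfies at the demand-satisfied users; hence every output of Algorithm~\ref{alg:glob_alg} obeys this system, and conversely the system pins down the powers uniquely (see below), which is exactly the claim $\{p_k\}=\mathcal{F}(\{\bh_k\})$.

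First I would verify continuity of each building block. The precoding step \eqref{eq:well} with the linear designs used in this paper is continuous: for RZF the matrix $\bH^H\bH+(K\sigma^2/P_{\max})\bI_K$ is positive definite for every $\bH$, so its inverse and the subsequent column normalization are continuous everywhere; for ZF the same holds on the open, dense set of channels with $\mathrm{rank}(\bH)=K$, which is the operational regime since $K\le N$. The gain/interference map \eqref{eq:M1} is a real quadratic polynomial in the entries of $\bh_k$ and $\bw_\ell$, hence continuous. The SINR map \eqref{eq:M2} is a ratio whose denominator is bounded below by $\sigma^2>0$ uniformly, so it is continuous in $(\{p_\ell\},\{\mu_{kl}\})$.

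Next I would treat the power-recovery step. Conditioned on a fixed demand-satisfied set $\Qcal^\ast$ and the requirements $\{\xi_k\}$, the relations \eqref{eq:M3} restricted to $k\in\Qcal^\ast$ form a linear system which, collected in matrix form exactly as in Theorem~\ref{Cardinality}, reads $(\bI_K-\mathbf{R}\mathbf{Q})\,\mathbf{p}^\ast=\pmb{\nu}$; by the Perron--Frobenius argument behind \eqref{eq:rho} the spectral radius $\lambda(\mathbf{R}\mathbf{Q})<1$, so $\bI_K-\mathbf{R}\mathbf{Q}$ is invertible and $\mathbf{p}^\ast=(\bI_K-\mathbf{R}\mathbf{Q})^{-1}\pmb{\nu}$ is the unique solution, depending continuously on $\mathbf{R},\mathbf{Q},\pmb{\nu}$ and therefore on $\{\bh_k\}$. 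Composing this with the continuous maps \eqref{eq:well}--\eqref{eq:M2}, $\mathcal{F}$ is continuous on every region of channel space on which $\Qcal^\ast$ is locally constant; fixing the tie-breaking rule of Algorithm~\ref{alg:glob_alg} on the measure-zero boundaries where $\Qcal^\ast$ changes makes $\mathcal{F}$ well defined as stated. Restricting $\{\bh_k\}$ to a compact set — which is physically justified since path loss, antenna gains and the noise normalization keep $\|\bh_k\|$ bounded — $\mathcal{F}$ is then a continuous function on a compact domain, and by the universal approximation theorem a fully-connected network with enough hidden units approximates it to arbitrary sup-norm accuracy; training on samples $(\{\bh_k\},\{p_k^\ast\})$ produced by Algorithm~\ref{alg:glob_alg} realizes such an approximation, which proves the lemma.

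The main obstacle is the discreteness of $\Qcal^\ast$: the end-to-end map of Algorithm~\ref{alg:glob_alg} is only piecewise continuous, with jumps across the surfaces where the optimal satisfied set changes, so global continuity cannot be claimed. The resolution — and the reason the lemma is phrased in terms of learning $\mathcal{F}$ rather than the raw algorithmic map — is to run the continuity argument region by region and to note that $\mathcal{F}$ encodes precisely the fixed-point structure \eqref{eq:M2}--\eqref{eq:M3}, so the network only needs to reproduce $\mathcal{F}$ where it is continuous; the boundary set, being of measure zero, affects neither the approximation bound nor the training loss.
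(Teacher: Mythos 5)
Your overall scaffolding (continuity of \eqref{eq:well}--\eqref{eq:M1}, piecewise continuity across regions where the satisfied set is constant, universal approximation on a compact channel domain, $L^2$-harmless measure-zero boundaries) is in the right spirit and is considerably more detailed than the paper's own justification, which reduces to the single observation that $\Qcal^\ast$ is computable from the predicted powers through \eqref{eq:M2}, so a network need only learn the power map built from the continuous compositions \eqref{eq:well}--\eqref{eq:M3}. However, your central characterization step has a genuine gap: you identify the output of Algorithm~\ref{alg:glob_alg} with the unique solution of the square linear system $(\bI_K-\mathbf{R}\mathbf{Q})\,\mathbf{p}^\ast=\pmb{\nu}$. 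That system characterizes the minimum-power allocation that meets \emph{all} $K$ demands with equality; it is the feasibility certificate behind Theorem~\ref{Cardinality}, not the algorithm's output. In the non-congestion branch the algorithm solves \eqref{RelatedWork}, a sum-throughput maximization in which the demand constraints need not be active and the full budget $P_{\max}$ is typically spent, so $\mathbf{p}^\ast\neq(\bI_K-\mathbf{R}\mathbf{Q})^{-1}\pmb{\nu}$ in general. In the congestion branch, only the users in $\Qcal^{\ast,(n-1)}$ carry equality rate constraints in \eqref{alg_SRM:phase23}; the powers of the remaining users are fixed by the water-filling step maximizing \eqref{alg_SRM:phase23a}, which your linear system does not see, so it cannot "pin down the powers uniquely." Moreover, $\lambda(\mathbf{R}\mathbf{Q})<1$ is precisely the condition that may fail under congestion, so it cannot be invoked unconditionally. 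A related conceptual slip: \eqref{eq:M3} is just \eqref{eq:M2} solved for $p_k$, so \eqref{eq:M2}--\eqref{eq:M3} are identities satisfied by \emph{any} power vector rather than a fixed-point system determining one; the paper uses them to exhibit structure (and to show $\Qcal^\ast$ follows from $\{p_k^\ast\}$), not to define the powers.

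What is missing, then, is the continuity (within each region of constant active/satisfied set) of the actual optimizer maps of \eqref{RelatedWork}, \eqref{RelatedWorkSumRate} and \eqref{alg_SRM:phase23} as functions of the channels --- e.g., via the KKT/water-filling closed forms or a maximum-theorem argument showing the water levels and the equality-constrained powers vary continuously with $\{\mu_{k\ell}\}$. Replacing your linear-system step with such an argument (and keeping your region-wise approximation and measure-zero boundary discussion, with approximation understood in $L^p$ rather than sup norm across jump surfaces) would close the gap; as written, the key uniqueness claim does not describe what Algorithm~\ref{alg:glob_alg} outputs.
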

	\begin{proof}
		\textcolor{black}{The proof is since $\Qcal$ can be computed by  $\{ p_{k'} \}$ and is omitted due to space limitations.}
	\end{proof}
	\textcolor{black}{Lemma~\ref{lemmaExistNeural} unveils that  neural networks with supervised learning only attract insightful information of the propagation channels to learn the features of $\mathcal{F}(\{ \mathbf{h}_k \})$, and then predict the transmit powers with low complexity thanks to the fact that  $\Qcal$ is expressed in \eqref{eq:M2}. As one of the main contributions, we only exploit the channel gains, i.e., magnitudes only, to design a fully-connected deep neural network based on \eqref{eq:M1} for given precoding vectors as illustrated in Fig.~\ref{fig:DNN_model}.}
	
		\begin{figure*}[t]
		\begin{minipage}{0.325\textwidth}
			\centering
			\includegraphics[width= 1.1 \textwidth]{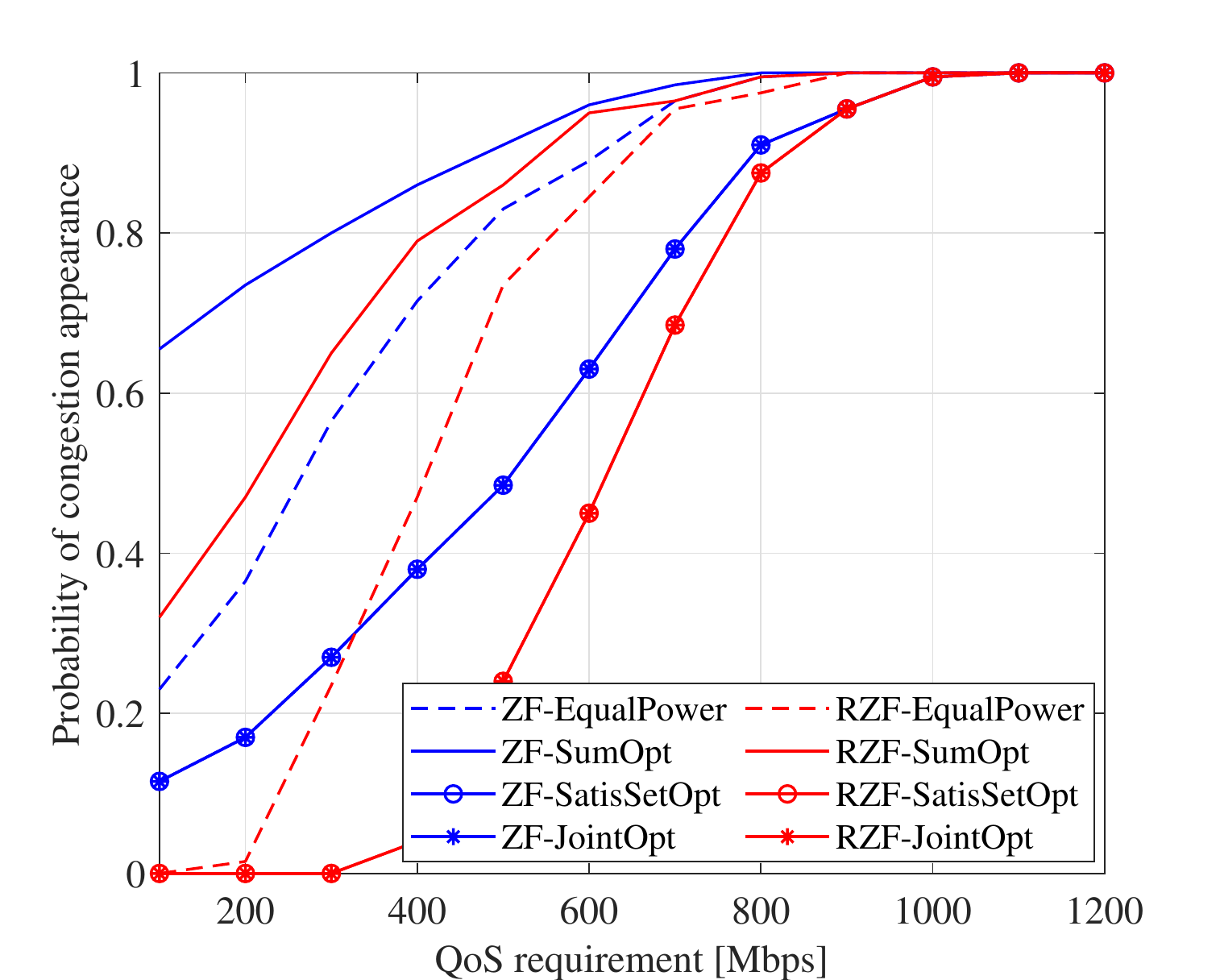} \\
			$(a)$
		\end{minipage}
		\hfill
		\begin{minipage}{0.325\textwidth}
			\centering
			\includegraphics[width= 1.1 \textwidth]{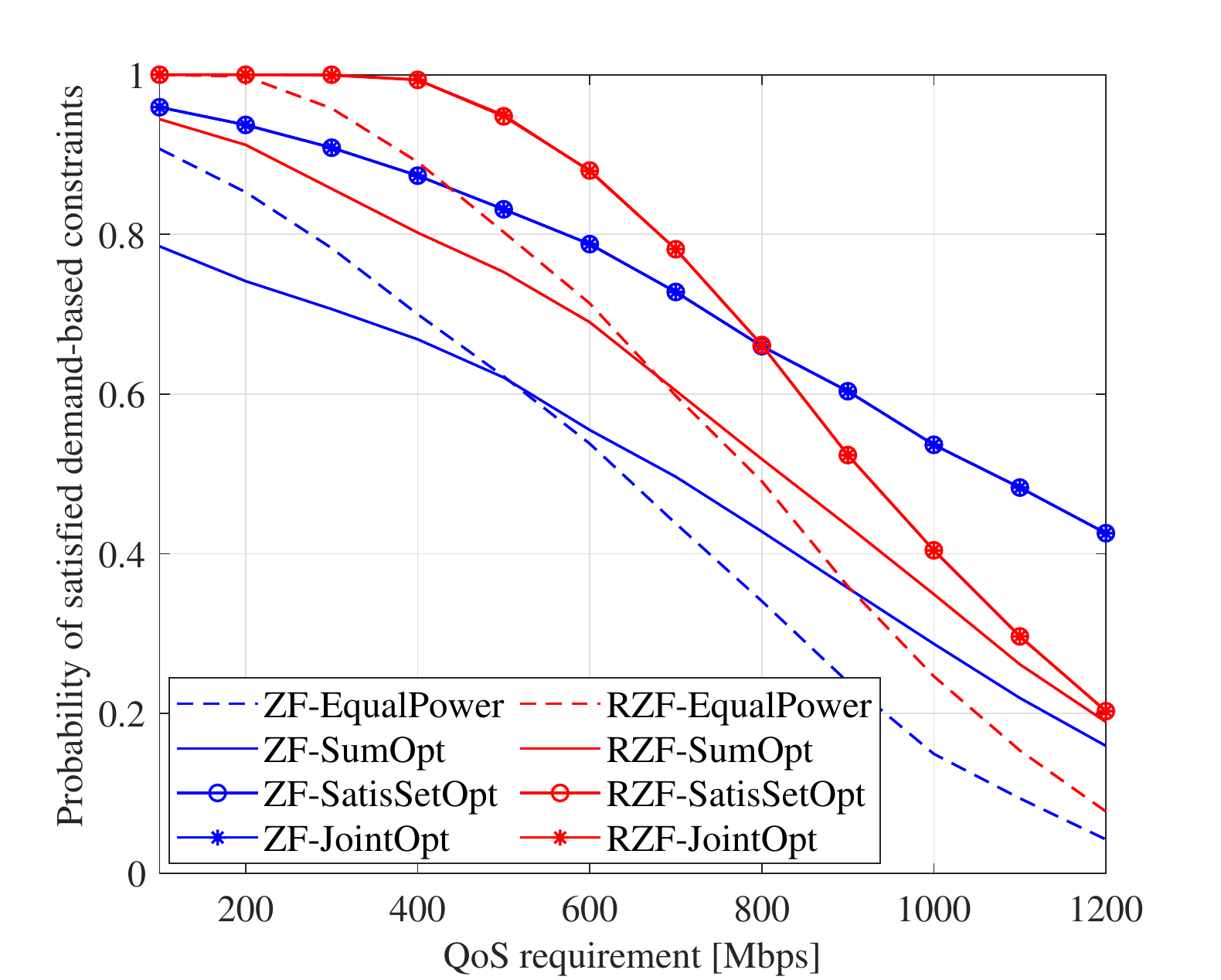} \\
			$(b)$
		\end{minipage}
		\hfill
		\begin{minipage}{0.325\textwidth}
			\centering
			\includegraphics[width= 1.1 \textwidth]{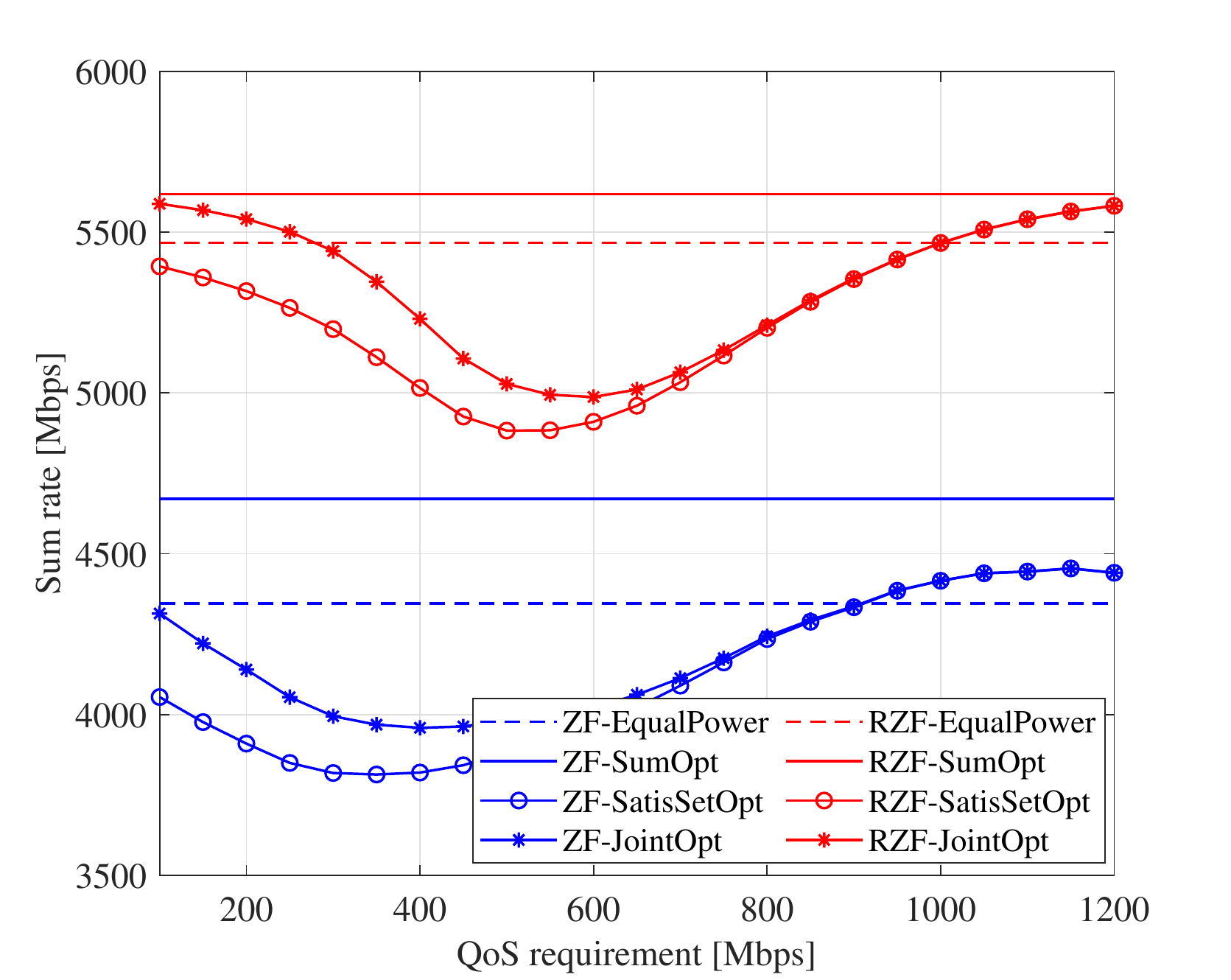} \\
			$(c)$
		\end{minipage}
		\caption{\textcolor{black}{The system performance:
				$(a)$ plots The probability of congestion appearance vs the data throughput requirement; $(b)$ plots the probability of demand-satisfied users vs the data throughput requirement; and $(c)$ plots the sum data throughput vs the data throughput requirement.}} \label{Fig:SysPer}
	\end{figure*}
	\textcolor{black}{We define the vector $\tilde{\bh}_{k} = [ |h_{k1}|, \ldots,  |h_{kN}|]^T \in \mathbb{R}_{+}^N$ that includes the channel gains for $\UEk$, then  all the required channel gains needed for the input of the neural network are collected into a vector as $\mathbf{x} =[\tilde{\bh}_{1}^T, \dots,\tilde{\bh}_{K}^T]^T \in\mathbb{R}_+^{KN}$. We stress that the channel gains may be sometimes extremely small due to deep fading; thus, they are normalized to compensate the fluctuations and randomness from the radio environment. Mathematically, the normalized vector $\mathbf{x}_{\mathsf{in}} \in \mathbb{R}^{KN}$ is derived from $\mathbf{x}$ as
		\begin{equation} \label{eq:xinm}
	[\mathbf{x}_{\mathsf{in}}]_m = ([\mathbf{x}]_m -[\mathbf{x}_{\min}]_m)/( [\mathbf{x}_{\max}]_m - [\mathbf{x}_{\min}]_m),
		\end{equation}
	where $[\mathbf{x}]_m$ denotes the $m$-th entry of vector $\mathbf{x}$; $\mathbf{x}_{\min}, \mathbf{x}_{\max} \in \mathbb{R}_{+}^{KN}$ with
	$[\mathbf{x}_{\min}]_m = \min \, \{ [\mathbf{x}]_m \} \mbox{ and } [\mathbf{x}_{\max}]_m = \max \, \{ [\mathbf{x}]_m \},$ 
	in which $\{ [\mathbf{x}]_m \}$ contains the different realizations of $[\mathbf{x}]_m$. In order to create a compact set, the channel gains and the transmit powers are normalized. After that, the normalized data $\mathbf{x}_{\mathsf{in}}$ is the input of the neural network to optimize the values of biases and weights.  
	 Specifically, as $\mathbf{x}_{uv}$ represents the input vector of the $u$-th neuron of the $v$-th hidden layer, the output  is given as 
	 \begin{equation}\label{}
	 	y_{uv}= f_{uv}(\mathbf{w}_{uv}^T \mathbf{x}_{uv} + b_{uv})
	 \end{equation}
 with $\mathbf{w}_{uv}$ and $b_{uv}$ being the corresponding weight and bias; $f_{uv}(\cdot)$ denote the activation function. Let us denote 
	the output of the neural network as $\tilde{\mathbf{p}} \in\mathbb{R}_+^K$.
	In  the testing phase,	the predicted transmit power vector $\hat{\mathbf{p}}  \in \mathbb{R}_+^K$ is denormalized as
		\begin{equation}
	[\hat{\mathbf{p}}]_k = [\tilde{\mathbf{p}}]_k \left( [\tilde{\mathbf{p}}_{\max}]_k - [\tilde{\mathbf{p}}_{\min}]_k \right)  + [\tilde{\mathbf{p}}_{\min}]_k,
		\end{equation}
	where $[\cdot]_k$ is its $k$-th element, whilst $[\tilde{\mathbf{p}}_{\max}]_k$ and $[\tilde{\mathbf{p}}_{\min}]_k$ represent the maximum and minimum transmit powers for $\UEk$ in the data set. To ensure the constraint \eqref{probGlobalc}, we use the following mapping
		\begin{equation}
	[\hat{\mathbf{p}}^\ast]_k = P_{\max} [\hat{\mathbf{p}}]_k  \big/ \sum\nolimits_{k' \in \Kcal} [\hat{\mathbf{p}}]_{k'},
		\end{equation}
	and then $\sum\nolimits_{k \in \Kcal}  [\hat{\mathbf{p}}^\ast]_k = P_{\max}$ is related to the full power consumption \cite{Bjon13:tcit}.
	The mean squared error (MSE) criteria is exploited to define the loss function as
		\begin{equation}\label{loss_function}
	\mathcal{L}^\text{MSE}( {\Theta}) = \mathbb{E} \{ \|\tilde{\mathbf{p}} - \tilde{\mathbf{p}}^{\ast}\|^2_2 \},
		\end{equation}
	where ${\Theta}$ includes all the weights and biases; $\tilde{\mathbf{p}}^\ast$ contains the transmit powers attained from Algorithm~\ref{alg:glob_alg} after normalization.} 
	\section{Numerical Results} \label{Sec:Results}
	A GEO satellite network with $N=7$ beams  serving $K=7$ users in each coherence interval with either the ZF or RZF precoding method is considered for simulations. The parameters associated with the satellite and the beam radiation patterns are provided by European Space Agency (ESA)  in the context of \cite{ESA}. In detail, the radiation patterns are based on a Defocused Phased Array-Fed Reflector (PAFR), with reflector size of $2.2$m and an array diameter of roughly $1.2$m. The antenna array before the reflector is a circular array with the space of $2\times$ carrier wavelength  and $511$ elements.The satellite location is at $13^\circ$~E, and the system operates at Ka band, for which the carrier frequency is $20$~[GHz] \cite{CGD}. The system bandwidth is $500$~[MHz] and the satellite height is $35,786$~[km].  The receive antenna diameter is $0.6$~m and the noise power is about $-118.3$~dB.  For the data transmission, the maximum transmit power is $P_{\max}  = 23.37$~dBW.  For the data-driven solution, we build a fully-connected neural network with two hidden layers including $128$ and $64$ neurons, respectively. The $25000$ trials of various user locations are gathered for the training phase, while  the $10000$ trials are used for the testing phase. By employing the water-filling technique to attain a semi-closed form solution in each iteration to the transmit powers, the following benchmarks are considered:
	\begin{itemize}
		\item[$i)$] \textcolor{black}{\textit{Joint sum data throughput and demand-satisfied set maximization} (JointOpt) is given in Algorithm~\ref{alg:glob_alg}. We apply this algorithm  for the ZF and RZF beamforming methods.} 
		\item[$ii)$] \textcolor{black}{\textit{Demand-satisfied set maximization} (SatisSetOpt) is a relaxation of JointOpt  guaranteeing users' demands only. When all the users are served with at least their requested data throughput, the remaining transmit power budget is equally allocated to every user.} 
		\item[$iii)$] \textcolor{black}{\textit{Sum data throughput maximization} (SumOpt) has been recently proposed in \cite{lu2019robust}, which only focuses on  the sum data throughput maximization. Consequently, users with weak channel conditions may be ignored from service.} 
		\item[$iv)$] \textcolor{black}{\textit{Equal transmit power allocation} (EqualPower) is a baseline to verify the merits of power control and demand-satisfied maximization \cite{trinh2021user, Krivochiza:21:access}. A transmit power amount of $14.92$~dB is allocated to each user without a maintenance on users'  demands.}  
	\end{itemize}
	\textcolor{black}{Fig.~\ref{Fig:SysPer}(a) shows the probability in which the congestion appears. Specifically, it  is defined by the trials when the system cannot serve all users in the coherence interval with their data throughput requirements. If the data throughput requirement increases, our algorithm offers the lowest congestion probability for both the precoding methods, i.e., the RZF and ZF, especially at a low data throughput value. Since SumOpt only concentrates on the total  sum data throughput, it gives the highest congestion. The reason is that users with weak channel gains get less transmit power, and therefore there is no data throughput guarantee for them. Fig.~\ref{Fig:SysPer}(b) plots the probability of the demand-based constraint satisfaction, which is $\mathbb{E}\{ |\Qcal| \}/K$.  As the data throughput requirement grows up, the satisfaction degrades because the system with a finite transmit power level faces challenges in guaranteeing the requests from multiple users. Meanwhile, Fig. \ref{Fig:SysPer}(c) visualizes the scarification of sum data throughput to have more demand-satisfied users. EqualPower and SumOpt provide the constant sum rate since these benchmarks allocate the transmit power without any guarantee.} 
	\begin{table}[t]
		\caption{The  run time (milliseconds) and sum data throughput of the model-based and data-driven solutions}
		\label{runtime}
		\centering
		\begin{tabular}{|l|c|c|c|c|}
			\hline
			& \begin{tabular}[c]{@{}l@{}}Data Thro. \\ require.\\ $\mbox{[Mbps]}$\end{tabular} & \begin{tabular}[c]{@{}l@{}}Time \\ {[}ms{]}\end{tabular} & \begin{tabular}[c]{@{}l@{}}Sum  \\  {[}Mbps{]}\end{tabular} & \begin{tabular}[c]{@{}l@{}}Percentage \\ of satis.\\ {[}\%{]}\end{tabular} \\ 
			\hline
			\multirow{2}{*}{Model-based (ZF)}& $\xi_k = 250$ & $17.38$  & $4054$ & $92.14$ \\ 
			\cline{2-5} 
			& $\xi_k = 500$   & $19.7$ & $3984$ & $83.14$  \\ 
			\hline
			\multirow{2}{*}{Model-based (RZF)} & $\xi_k = 250$ & $19.26$& $5542$ & $99.86$ \\ 
			\cline{2-5} 
			& $\xi_k = 500$  & $26.7$ & $5077$ & $94.69$  \\ 
			\hline
			\multirow{2}{*}{Data-driven (ZF)}  & $\xi_k = 250$ & $2.0$ & $4260$& $82.38$ \\ \cline{2-5} 
			& $\xi_k = 500$ & $2.1$ & $4195$ & $65.88$ \\ 
			\hline
			\multirow{2}{*}{Data-driven (RZF)} & $\xi_k = 250$ & $1.3$ & $5545$ & $98.47$ \\ \cline{2-5} 
			& $\xi_k = 500$ & $1.8$ & $5124$& $87.15$ \\ 
			\hline
		\end{tabular}
	\end{table}
\textcolor{black}{In addition, the sum data throughput and run time of our proposed solutions are given in Table~\ref{runtime}.  {\color{black} It is noted that the performance of the data-driven solution for ZF precoding might  be lower in comparison with others (e.g., RZF) because the continuous mappings in \eqref{eq:well}--\eqref{eq:M3} may  not be isomorphisms since the codomains are non-smooth functions, especially for the achievable rates in \eqref{eq:M2}. The fact manifests difficulties in training and predicting the joint power allocation and satisfied-user set optimization.} Despite a slightly higher run time as the data throughput demands  increase, our solutions consume time in milliseconds (ms) that is  very fast for GEO satellite systems. Properly defining the continuous mappings makes the data-driven solution have lower run time  $15\times$ than the model-based solution.} 
	
	\section{Conclusions}\label{sec:conclusion}
	\textcolor{black}{The congestion control has been studied under the service-based constraints for multi-beam multi-user GEO satellite systems. By utilizing the multi-objective optimization theory, we have demonstrated that  the sum data throughput and satisfied-user set with all the propagation channel conditions can be jointly optimized.  With respect to guarantee the data throughput requirements as the priority, we have utilized the model-based solution to design an algorithm effectively operating in all the domains, especially the infeasibility caused by the limited transmit power and the data throughput demands. In addition, the time consumption by deploying a fully-connected neural network is much less than  $10$~ms making a step toward practical real-time power control as well as satisfied-user maintenance in satellite networks.}
	\bibliographystyle{IEEEtran}
	\balance
	\bibliography{Journal}
\end{document}